\newtheorem{theorem}{Theorem}[section]
\newtheorem{lemma}[theorem]{Lemma}
\newtheorem{corollary}[theorem]{Corollary}
\newtheorem{claim}[theorem]{Claim}
\newtheorem{definition}{Definition}
\newtheorem{remark}{Remark}
\newcommand\N{\mathbb{N}}
\newcommand\E{\mathop{\mathbb{E}}}
\newcommand\R{\mathbb{R}}
\newcommand\bfx{\textbf{x}}
\newcommand\bfy{\textbf{y}}
\newcommand\bfw{\textbf{w}}
\newcommand\mcQ{\mathcal{Q}}
\newcommand\mcG{\mathcal{G}}
\newcommand\mcC{\mathcal{C}}
\newcommand\mcE{\mathcal{E}}
\newcommand\mcA{\mathcal{A}}
\newcommand\mcX{\mathcal{X}}
\newcommand\Del[2]{\textup{Del}_{{#1}, {#2}}}
\newcommand\Insert[2]{\textup{Ins}_{{#1}, {#2}}}
\newcommand{\Bernoulli}{\textup{Bern}}
\newcommand\sgn[1]{\textup{sign}\left({#1}\right)}
\newcommand{\NumStates}{S}
\newcommand{\NumSamples}{N}
\newcommand\wt[1]{\widetilde{#1}}
\newcommand\abs[1]{{\left\lvert{#1}\right\rvert}}
\newcommand\Prob[2]{{\Pr_{#1}\left[ {#2} \right]}}
\newcommand\EE[1]{{\mathop{\mathbb{E}}\left[ {#1} \right]}}
\newcommand\Expect[2]{{\mathop{\mathbb{E}}_{#1}\left[ {#2} \right]}}
\DeclareMathOperator{\Var}{Var}
\newcommand\supp{\textup{Supp}}
\newcommand{\polylog}{\textnormal{polylog}}
\newcommand\poly{\textnormal{poly}}
\newcommand\defeq{\stackrel{\textup{def}}{=}}
\newcommand\set[1]{{\left\{ #1 \right\}}}
\newcommand{\bits}{\set{0, 1}}
\newcommand{\eg}{{e.g.}}
\newcommand{\ie}{{i.e.}}
\newcommand{\iid}{{i.i.d.}}
\newcommand{\etal}{et al.}
\newcommand{\QP}{\textup{QP}}
\newcommand{\Norm}[1]{\left\lVert {#1}\right\rVert}
\newcommand{\Geom}{\textup{Geom}}
\newcommand{\Ber}{\textup{Ber}}
\newcommand{\IP}[2]{\left\langle {#1} , {#2} \right\rangle}
\newcommand{\Paren}[1]{\left(#1\right)}
\newcommand{\DelChannel}{\textup{Deletion Channel}}
\newcommand{\InsChannel}{\textup{Insertion Channel}}
\title{The Quasi-probability Method and Applications for Trace Reconstruction}
\author[]{Ittai Rubinstein}
\affil[]{MIT EECS and CSAIL, Cambridge MA}
\date{\today}
\begin{document}

\maketitle

\begin{abstract}
    In the {\em trace reconstruction problem}, one attempts to reconstruct a fixed but unknown string $\bfx$ of length $n$ from a given number of {\em traces} $\wt{\bfx}$ drawn iid from the application of a noisy process (such as the deletion channel) to $\bfx$.
The best known algorithm for the trace reconstruction from the deletion channel is due to Chase, and recovers the input string whp given $\exp\left( \wt{O}\left(n^{1/5}\right) \right)$ traces~\cite{chase2021separating}.

The main component in Chase's algorithm is a procedure for {\em $k$-mer estimation}, which, for any marker $\bfw \in \bits^k$ of length $k$, computes a ``smoothed'' distribution of its appearances in the input string $\bfx$~\cite{cheng2023k,mazooji2024substring}.
Current $k$-mer estimation algorithms fail when the deletion probability is above $1/2$, requiring a more complex analysis for Chase's algorithm.
Moreover, the only known extension of these approaches beyond the deletion channels is based on numerically estimating high-order differentials of a multi-variate polynomial, making it highly impractical~\cite{rubinstein2022average}.

In this paper, we construct a simple Monte Carlo method for $k$-mer estimation which can be easily applied to a much wider variety of channels.
In particular, we solve $k$-mer estimation for any combination of insertion, deletion and bit-flip channels, even in the high deletion probability regime, allowing us to directly apply Chase's algorithm for this wider class of channels.

To accomplish this, we utilize an approach from the field of quantum error mitigation (the process of using many measurements from noisy quantum computers to simulate a clean quantum computer), called the {\em quasi-probability method} (also known as probabilistic error cancellation)~\cite{temme2017error,piveteau2022quasiprobability}.
We derive a completely classical version of this technique, and use it to construct a $k$-mer estimation algorithm.

Though our algorithm is ``quantum-inspired'', no background in quantum computing is needed to understand this paper.

\end{abstract}

\newpage

\thispagestyle{empty}

\section{Introduction}\label{sec:introduction}

\subsection{The Trace Reconstruction Problem}

Trace reconstruction is a fundamental problem in computational biology, which attempts to model the challenges in aligning multiple DNA sequences to a common ancestor. In this problem, a binary string $\bfx$ of length $n$, is subjected to a random synchronisation channel such as a {\em deletion channel}, an {\em insertion channel} or a {\em symmetry channel}. In a deletion channel, each bit of the input string is independently deleted with probability $\delta$, and in an insertion channel $\text{Geom}(1 - \eta)$ independent uniformly distributed bits are inserted before each bit of the message, and the symmetry channel with parameter $\sigma$ flips the value of each input bit independently with probability $\sigma$. 
We call the output of such a synchronisation channel a {\em trace}.

The trace reconstruction problem attempts to solve the following problem: given a set of traces $\wt{\bfx}^1, \ldots, \wt{\bfx}^\NumSamples \sim \mcC(\bfx)$, reconstruct the input string $\bfx$.

This problem has been extensively studied since the early 2000s \cite{batu2004reconstructing}. 
There are two main versions of the problem: the {\em worst-case} and the {\em average-case}. In the worst-case, the input string $\bfx$ is adversarially chosen, and the reconstruction algorithm must work for all possible strings of length $n$. In the average-case, $\bfx$ is chosen uniformly at random from all possible strings of length $n$.

McGregor~\etal~\cite{mcgregor2014trace} and Rubinstein~\cite{rubinstein2022average} showed that the average and worst-case versions of the problem are essentially equivalent (McGregor showed that if $h(n)$ traces are needed for worst-case trace reconstruction, then at least $h(\log(n))$ traces are needed for average-case trace reconstruction and Rubinstein showed that if $h(n)$ traces suffice for a problem similar to worst-case trace reconstruction, then $h(O(\log(n)))$ suffice for the average-case).
For the rest of this paper we will focus on the worst-case trace reconstruction.

Chase~\cite{chase2021new} proved that in the worst-case at least $n^{3/2} / \poly(\log n)$ traces are required to reconstruct a string of length $n$, improving upon the previous bound of $n^{5/4} / \poly(\log n)$ by Holden and Lyons \cite{holden2020lower}. Holenstein et al. \cite{holenstein2008trace} established an upper bound of $\exp(\widetilde{O}(n^{1/2}))$ on the sample complexity of the worst-case trace reconstruction problem. This was improved to $\exp(O(n^{1/3}))$ by Nazarov and Peres \cite{nazarov2017trace}, De et al. \cite{de2017optimal}, and later by Chase \cite{chase2021separating}, who improved the bound to $\exp(\widetilde{O}(n^{1/5}))$.

The ability to easily adapt a trace reconstruction algorithm to more complex error models is crucial for any real-world implementation, as it is highly unlikely that we will be faced with a pure form of the problem in practice.
While Nazarov and Peres and De et al.'s algorithms are directly applicable to the insertion-deletion-symmetry channel and were extended to an even wider class of channels by Cheraghchi et al.~\cite{cheraghchi2022mean}, Chase's analysis is limited to the deletion channel.

The key step in Chase's trace reconstruction algorithm is a process later described as the ``$k$-mer estimation'' problem~\cite{mazooji2024substring,cheng2023k}.
The $k$-mer statistic of a string $\bfx$ is roughly defined as a mapping from any marker $\bfw \in \bits^k$ of length $k$ to a ``smoothed'' set of the indices where $\bfw$ appears in the input string $\bfx$.

Current $k$-mer estimation approaches fail in the high deletion probability regime $\delta \geq 1/2$.
Chase overcomes this limitation in his algorithm, but this requires an extension of the complex analysis involved, and an increased time complexity for this regime.

Rubinstein~\cite{rubinstein2022average} extended Chase's analysis to insertion-deletion-symmetry channels by generalising the complex analysis approach to $k$-mer estimation, but the analysis is highly technical and the resulting algorithm, which is based on numerically estimating a high-order derivative of a multivariate polynomial to a very high degree of accuracy, would be challenging to implement in practice.

In this paper, we use the quasi-probability method to construct a much simpler and more robust version of the $k$-mer estimation algorithm.
The quasi-probability based $k$-mer estimation algorithm can be easily applied to a wide variety of channels, including insertion, deletion and bit-flip channels.

Moreover, we will show that the quasi-probability method is inherently self-composeable.
Informally, this means for any two noise channels $\mcC_1, \mcC_2$ and quasi-probability ``solutions'' $\mcQ_1, \mcQ_2$ for each of these channels, we can compose them to construct a quasi-probability method $\mcQ$ for the composed noise channel $\mcC = \mcC_1 \circ \mcC_2$.

This will allow us to extend this simple $k$-mer estimation to any combination of insertion, deletion and bit-flip channels, even in the high-deletion regime $1/2 \leq q < 1$.

\begin{theorem}[Main Result (informal)]
\label{thm:main_res_inf}
    Let $\mcC_1, \ldots, \mcC_L$ be any set of insertion / deletion / symmetry channels.
    Then there exists an algorithm $\mcA$ with runtime 
    and sample complexity $\exp \left(\wt{O}\left( n^{1/5} \right)\right)$ for trace reconstruction from the composite channel $\mcC = \mcC_1 \circ \cdots \circ \mcC_L$.
\end{theorem}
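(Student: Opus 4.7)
The plan is to reduce Theorem~\ref{thm:main_res_inf} to the construction of a $k$-mer estimator for the composite channel $\mcC = \mcC_1 \circ \cdots \circ \mcC_L$, and then feed this estimator into Chase's framework, which produces an $\exp(\wt{O}(n^{1/5}))$-sample trace reconstruction algorithm for any channel admitting good $k$-mer estimation. The task therefore splits into: (i) a classical formulation of the quasi-probability method, (ii) primitive quasi-probability solutions for deletion, insertion, and symmetry channels, (iii) a composition theorem combining these into a solution for $\mcC$, and (iv) verifying that the resulting estimator meets the accuracy and sample-complexity demands of Chase's analysis.

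For (i), I would define a quasi-probability solution for a channel $\mcE$ as a signed distribution $\mcQ$ over easy-to-implement post-processing maps $\{\Phi_i\}$ such that $\sum_i \mcQ_i \, \Phi_i \circ \mcE = \mathrm{Id}$, so that any linear functional of $\bfx$ can be estimated by drawing $i$ with probability proportional to $\abs{\mcQ_i}$ and weighting the sample by $\sgn{\mcQ_i}$, incurring a variance blow-up of $\Norm{\mcQ}_1^2$. For (ii), each primitive channel admits an explicit signed inverse: for the symmetry channel, a bit-flip twirl whose signed weights come from inverting the $2 \times 2$ noise matrix; for the deletion channel, random insertions of uniform bits with signed compensation; and for the insertion channel, random deletions at random positions. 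In each case the per-bit $1$-norm is a bounded constant depending only on the channel parameter, even when $q \to 1$. For (iii), composition is essentially tautological: running the primitive solutions $\mcQ_L, \ldots, \mcQ_1$ in reverse order inverts $\mcC$ with $1$-norm equal to the product of the primitive $1$-norms, so the $L$-fold composite incurs only a constant-factor blow-up per bit.

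The main obstacle is controlling the $1$-norm cost when the quasi-probability solution is applied along the full length of a trace: a bounded per-bit constant still yields $\exp(\Omega(n))$ variance if applied globally. To circumvent this I would apply the quasi-probability decomposition only within the length-$\wt{O}(k)$ window of each trace relevant to the $k$-mer being estimated, leveraging the alignment structure already present in Chase's analysis; this restricts the $1$-norm cost to $\exp(\wt{O}(k))$, which fits Chase's $\exp(\wt{O}(n^{1/5}))$ budget once we choose $k = \wt{\Theta}(n^{1/5})$. A secondary obstacle is verifying that Chase's alignment and regularity machinery, originally tailored to the pure deletion channel, extend to the composite channel $\mcC$; here I would parallel the extensions of earlier single-trace estimators to insertion-deletion-symmetry channels due to Cheraghchi~et~al.\ and Rubinstein. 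Once these pieces are in place, the sample-complexity and runtime accounting are routine bookkeeping.
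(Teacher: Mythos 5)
Your high-level architecture closely mirrors the paper's: formulate a classical quasi-probability method, find QP decompositions for the three primitive channels, compose them to handle $\mcC = \mcC_1 \circ \cdots \circ \mcC_L$, restrict the decomposition to a length-$O(k)$ window to keep the sampling overhead to $\exp(O(k))$, choose $k = \wt{\Theta}(n^{1/5})$, and feed the resulting $k$-mer estimator into a Chase-style reconstruction argument. (One minor deviation: the paper actually plugs the $k$-mer estimator into a Holenstein-et-al.-style linear program rather than Chase's original machinery, invoking only Chase's polynomial separation bound, Theorem~\ref{thm:like_borwein}; but this is a routine swap.)

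The genuine gap is in step (ii): your primitive QP decompositions for the deletion and insertion channels are backwards, and the intuitive fix you propose cannot work. You suggest inverting the deletion channel by inserting uniformly random bits with signed compensation, and inverting the insertion channel by deleting bits at random positions. Neither can yield a valid QP decomposition $\sum_i q_i\,\Phi_i \circ \mcE = \mathrm{Id}$: a uniformly random inserted bit carries no information about the bit that was deleted, so for any non-constant observable the contribution of every branch that actually inserts is a fixed constant (e.g., $1/2$ for a parity), and no choice of signed weights can restore dependence on the lost bit; symmetrically, you have no way to target the extraneous inserted bits for deletion without also hitting message bits. The paper's actual decomposition is the counterintuitive one: you invert a deletion channel by performing \emph{more deletions}, and invert an insertion channel by performing \emph{more insertions}. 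This works precisely because the operations $\Del{i}{j}$ (and $\Insert{i}{j}$) form a commutative semigroup, $\Del{i}{j_1}\circ\Del{i}{j_2}=\Del{i}{j_1+j_2}$, so that the composite alignment shift is the sum of the channel's shift and the post-processing shift; writing the inverse via the reciprocal of the generating function, $\mcG_q(z) = \bigl(\mcG_{\Bernoulli(\delta)}(z)\bigr)^{-1} = \frac{1}{1-\delta}\sum_{j\ge 0}\bigl(\tfrac{-\delta}{1-\delta}\bigr)^j z^j$, gives an alternating-sign quasi-distribution over additional deletion counts with finite norm $\gamma = \tfrac{1}{1-2\delta}$ when $\delta<1/2$ (and an analogous two-point distribution $\Bernoulli(-\eta/(1-\eta))$ for insertions). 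The goal is never to reconstruct the lost bit but to re-weight the alignment distribution between trace indices and message indices so that, in QP expectation, the $j$-suffix of the post-processed trace aligns with the $j$-suffix of the message. Your decomposition for the symmetry channel (inverting the $2\times2$ bit-flip matrix) is correct, but without the correct deletion and insertion decompositions the argument does not go through.
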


\begin{remark}
    Throughout the paper, we suppress dependencies on the channel parameters in big-O notations (i.e., our dependence on the number of insertion-deletion channels and their parameters). This is common practice in trace reconstruction literature.
\end{remark}

Finally, note that in addition to its simplicity and generalizability, our improved $k$-mer estimation algorithm also allows for a reduced time complexity in Chase's trace reconstruction algorithm.
In his original paper, Chase does not analyze the runtime of his reconstruction algorithm, but focuses on bounding the sample complexity.
Rubinstein shows that a straightforward implementation of Chase's algorithm would have a time complexity of $\exp \left(\wt{\Theta}\left( n^{4/5} \right)\right)$ when $q < 1/2$, and a complexity of $\exp \left(\Theta\left( n \right)\right)$ when applying Chase's solution for higher deletion probability.
Our quasi-probability based algorithm obtains a lower time complexity even for high deletion probabilities and more general combinations of channels.

\paragraph{Trace Reconstruction and Population Recovery}

Another interesting version of the trace reconstruction is trace reconstruction from a population~\cite{ban2019beyond,narayanan2021improved}.
In this version of the trace reconstruction problem, instead of having a fixed input string $\bfx$, each trace draws an input string $\bfx$ independently from some unknown distribution $\mcX$ over strings of length $n$, and applies the noise channel $\mcC$ to it.
The output of the reconstruction algorithm also changes, and our goal becomes to compute a distribution that is close to $\mcX$ (say in TVD).

The difficulty of population recovery from a deletion channel is often parameterized in terms of both the length of the input strings, and the size of the support of $\mcX$.
Ban et al.~\cite{ban2019beyond} show that for $\ell = \abs{\supp \set{\mcX}} \leq n^{0.499}$, at least $n^{\Omega(\ell)}$ traces are needed to reconstruct $\mcX$ to a reasonable degree of accuracy and Narayanan~\cite{narayanan2021improved} provides an algorithm for population recovery from $\exp\left(O(n^{1/3} \ell^2) \right)$.

The main technical component in our $k$-mer estimation algorithm will solve a problem slightly harder than population recovery for populations with large support.

\begin{theorem}[Recovery of Large Populations from Synchronisation Channels]
\label{thm:pop_recovery}
    Let $\mcC_1, \ldots, \mcC_L$ be any set of insertion, deletion, symmetry channels, and let $\mcC = \mcC_1 \circ \cdots \circ \mcC_L$ be their composition.

    There exists an algorithm $\mcA$ that, with high probability, recovers any population $\mcX$ of binary strings of length $n$ to within a TVD of $\varepsilon$, given $\exp\left(O(n)\right) \poly(\varepsilon^{-1})$ samples of $\mcC(\mcX)$, regardless of the size of the support of $\mcX$.
\end{theorem}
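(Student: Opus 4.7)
The plan is to construct, for each target string $\bfx \in \bits^n$, an unbiased estimator of $\Pr_{\mcX}[\bfx]$ via the quasi-probability method, and then run all $2^n$ such estimators in parallel on the same sample set to obtain a TVD-$\varepsilon$ estimate of $\mcX$.

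First, for each individual noise channel $\mcC_i$ (insertion, deletion, or symmetry), I view it as a linear map on the space of signed measures over binary strings of possibly varying length, and exhibit a quasi-probability inverse $\mcQ_i$: a family of signed coefficients $c^{(i)}_{\bfx, \wt{\bfx}}$ such that, for every distribution on the input $\bfX$,
\[
\sum_{\wt{\bfx}} c^{(i)}_{\bfx, \wt{\bfx}} \cdot \Pr\!\left[\mcC_i(\bfX) = \wt{\bfx}\right] \;=\; \Pr[\bfX = \bfx].
\]
The symmetry channel factors across bits, so its inverse decomposes as a tensor product of one-bit inversions with $L_1$ norm bounded by a constant per bit. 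The insertion and deletion channels couple positions through alignment shifts, so I will instead invert them length-by-length: condition on the trace length $m$ (which has known distribution) and invert the resulting finite linear map from $\bits^n$-measures to $\bits^m$-measures, using either a closed form from the Bernoulli generating polynomials familiar from mean-based trace reconstruction, or a direct combinatorial inversion. The goal at this stage is to bound the $L_1$ norm of $\mcQ_i$ by $\exp(O(n))$.

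Second, I invoke the self-composability property announced in the introduction of the paper: given quasi-probability inverses $\mcQ_1, \ldots, \mcQ_L$ for the individual channels, their composition $\mcQ = \mcQ_L \circ \cdots \circ \mcQ_1$ is a quasi-probability inverse for $\mcC = \mcC_1 \circ \cdots \circ \mcC_L$, and its $L_1$ norm is submultiplicative in the norms of the $\mcQ_i$'s. Since $L$ and the channel parameters are absorbed into the big-$O$, the resulting global norm remains $\gamma = \exp(O(n))$.

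Third, using the $N$ samples $\wt{\bfx}^1, \ldots, \wt{\bfx}^N \sim \mcC(\mcX)$, I form, for every $\bfx \in \bits^n$, the Monte Carlo estimator $\widehat{p}_{\bfx} \defeq \frac{1}{N} \sum_{j=1}^{N} c_{\bfx, \wt{\bfx}^j}$. Each summand is bounded in absolute value by $\gamma$, and unbiasedness is exactly the quasi-probability identity above. By Hoeffding's inequality, $N = \gamma^2 \cdot 4^{n} \cdot \poly(\varepsilon^{-1}) = \exp(O(n)) \poly(\varepsilon^{-1})$ samples suffice so that $\lvert \widehat{p}_{\bfx} - \Pr_{\mcX}[\bfx] \rvert \leq \varepsilon \cdot 2^{-n}$ for all $2^n$ strings simultaneously (the union bound contributes only an $O(n)$ factor inside the exponential). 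Summing these bounds gives $\ell_1$ error at most $\varepsilon$, and a projection onto the probability simplex converts the signed vector $\widehat{p}$ into a genuine distribution without worsening the TVD by more than a constant.

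The main obstacle is the first step: the $L_1$-norm bound on the quasi-probability inverse of the insertion and deletion channels. Unlike the symmetry channel, these channels do not factor across bits, so the full channel matrix must be inverted as a block, and naive inversion via cofactors gives catastrophic norm blowup. I expect the right route to be a mean-based closed form (inverting the polynomial transform associated to the channel in the sense of Nazarov--Peres and De et al.) together with a careful accounting of cancellations between the length-$m$ conditional inverses. Once this is in hand, the composability step is a clean multiplicative bound on norms of linear maps, and the Hoeffding/union-bound finish is routine.
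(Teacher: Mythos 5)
Your high-level skeleton matches the paper's: construct a quasi-probability inverse of the composed channel, form the Monte Carlo estimator $\widehat{p}_{\bfx} = \frac{1}{N}\sum_j c_{\bfx,\wt{\bfx}^j}$ for every target string, apply Hoeffding and a union bound over all $2^n$ strings with per-string tolerance $\varepsilon 2^{-n}$, and conclude a TVD bound. That part is fine, and you correctly identify the crux as bounding the quasi-probability ($\ell_1$) norm of the inverse of the insertion and deletion channels. But the route you propose for that crux has a genuine gap.

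You propose to ``condition on the trace length $m$ and invert the resulting finite linear map from $\bits^n$-measures to $\bits^m$-measures.'' This map does not have an inverse: for $m < n$ it sends a $2^n$-dimensional space onto a $2^m$-dimensional one, and even after stripping dimensions it is not injective (different input distributions give identical conditional trace laws at a fixed length; the extreme case $m=0$ makes this vivid). The information needed to invert the deletion channel is spread across all trace lengths, so there is no per-length block inverse to compose, and ``a careful accounting of cancellations between the length-$m$ conditional inverses'' cannot be made to work as stated. You leave this as an open obstacle, and the rest of the argument rests on it.

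The paper dissolves this obstacle not by inverting the channel as one big matrix at all, but by rewriting the deletion channel as a composition of single-position operations $\Del{i}{\Bernoulli(\delta)}$, inverting each one separately via a one-dimensional quasi-probability distribution over the number of bits deleted at that position (a signed geometric, obtained from $\mcG_{\Bernoulli(\delta)}(z)^{-1}$), and composing these per-position inverses left to right for $i = 1, \ldots, k$. Each per-position inverse has constant QP norm $\frac{1}{1-2\delta}$ (resp. $\frac{1+\eta}{1-\eta}$ for insertion), so the product over $k = n$ positions is $\exp(O(n))$ essentially for free. This per-position decomposition is precisely what replaces the block-matrix inversion you were reaching for, and without it the norm bound you need does not follow from the ideas in your sketch. (A secondary point: because these per-position inverses involve extra deletions, the paper must also add a buffer-length argument -- Claim~\ref{clm:no_overflows} -- to handle the small probability that the estimator tries to read past the available input; this is the source of the $\log(\varepsilon^{-1})$ terms in the input length and norm, a detail your $c_{\bfx,\wt{\bfx}}$ formulation does not surface. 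And for $\delta \geq 1/2$ the per-position decomposition fails; the paper first splits the deletion channel into a composition of lower-rate deletion channels, which your sketch also omits.)
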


The main focus of our paper will be to prove Theorem~\ref{thm:main_res_inf} by drawing inspiration from the field of quantum error mitigation.

\subsection{Quantum Error Mitigation and the Quasi-probability Method}

Quantum computing promises to provide exponential speed-ups on several crucial computational problems, such as integer factoring and simulation of quantum systems.
However, as we are now entering the Noisy Intermediate-Scale Quantum (NISQ) era, current quantum computers suffer from errors and are too small for error correction~\cite{Preskill2018quantumcomputingin}.

One approach to utilizing these NISQ devices is ``quantum error mitigation’’, loosely defined as any method of using many samples of noisy quantum processes to reconstruct some aspect of the behavior of a noiseless process~\cite{cai2023quantum}.
Many approaches to quantum error mitigation have been studied theoretically and are used in practice in experiments on quantum devices.
Here, we will focus on a specific approach called the ``quasi-probability method'' (also know as ``probabilistic error cancellation’’  or PEC) \cite{temme2017error,piveteau2022quasiprobability}.

Our main observation will be that the quasi-probability method is not limited to the quantum setting and can be applied to classical channels on a classical computer, allowing for several potential applications in learning theory problems.

In particular, we show that this approach can be used to simplify leading trace reconstruction algorithms and extend them to more complex noise channels.

Quasi-probability sampling or probabilistic error cancellation provides a framework for adding additional errors to a noisy process in such a way that when averaging over the samples, these additional errors will cancel the bias caused by the original errors (hence the name ``probabilistic error cancellation'').
This can seem somewhat counter-intuitive, so before diving into a detailed derivation of the general method in Section~\ref{sec:PEC}, we devote this subsection to giving an overview of the behaviour of the quasi-probability method in 2 simple settings.

\paragraph{Example 1: Single-Fault Experimental System}

Suppose we are given access to some experimental apparatus which outputs a real-valued sample $s \in [-1, 1]$ from some unknown distribution $S$.
After running this experiment $N$ times, we could gather our $N$ \iid~samples $s_1, \ldots, s_N$ and average them out to estimate $ s^* \defeq \sum_i s_i / N \approx \mu_s $.
An application of Hoeffding's inequality tells us that with high probability this estimate will be accurate up to a statistical error of order $O(1/\sqrt{N})$.

Now, suppose someone tells us that our experimental device is faulty and undergoes some error $E$ \iid~with some known probability $\varepsilon$ in each experiment.
We denote the output distribution of this faulty device by $S^\prime$.
If we simply average our samples, then $\varepsilon$ fraction of them would be faulty and we would incur a bias of order $\varepsilon$ to our estimate $s^*$.

For instance, consider a chemical engineer performing several iterations of a new reaction to estimate a specific property of the resulting solution, but whose lab assistant forgets to add the last ingredient with some known probability $\varepsilon$.
In this case, $E$ would represent the outcome of the experiment when the lab assistant forgets to include this ingredient, and $S^\prime$ would be the overall distribution of experiments run by this engineer.

If we are only allowed to sample from the original process and know nothing about how this error $E$ affects its output, then there isn't much more we can do.
The quasi-probability method requires only the additional ability to trigger this fault, causing the device to ``fail'' on purpose, in which case we assume that our sample comes directly from the error distribution $E$.

Neglecting terms of order $\varepsilon^2$ for the moment, the QP method described in Section~\ref{sec:PEC} will tell us to sample $s \leftarrow S^\prime$ from the faulty device  w.p.  $1-\varepsilon$.
On average, this will result in a single batch of approximately $(1-2\varepsilon) N$ ideal samples and $\varepsilon N$ with an unintentional error.
In the remaining $\varepsilon$ fraction of the cases, we trigger the error (causing us to sample from the error distribution $E$), but then flip the sign of the result.

In other words, each QP sample is selected \iid~at random from the following process:
\begin{enumerate}
    \item Select a bit $b \leftarrow \text{Bernoulli}(\varepsilon)$. This bit will decide whether we cause an intentional error (w.p. $\varepsilon$) or not (w.p. $1-\varepsilon$).
    \item Gather a sample from the chosen distribution $s \leftarrow \begin{cases} S^\prime & b = 0 \\ E & b = 1 \end{cases}$
    \item Output the sample with an appropriate sign $(-1)^{b} \cdot s$
\end{enumerate}

Averaging over these samples, we get
\[
\hat{s} = \frac{\sum_{1 \leq i \leq N} s_i (-1) ^{b_i}}{N} \approx (1-\varepsilon) \langle S^\prime \rangle - \varepsilon \langle E \rangle = (1-2\varepsilon) \langle S \rangle + \varepsilon \langle E \rangle - \varepsilon \langle E \rangle = (1-2\varepsilon) \langle S \rangle
\]

On average, the two sources of erroneous samples cancel out, while the $\approx (1-2\varepsilon) N$ error-free samples are still taken into account for the average.
Therefore, $\hat{s}$ is an unbiased estimator for $(1-2\varepsilon) \mu_s$.
In order to obtain an estimation of $\mu_s$, we simply rescale $\hat{s}$ by a factor of $\gamma = (1-2\varepsilon)^{-1}$.

This gives us a black-box approach to estimating the average of an error-free process given access to a single-fault version of the process and the ability to cause an error on purpose.

\paragraph{Example 2: Multiple Faults and Recursive Construction}

Perhaps the greatest strength of the QP method is that it is self-composable, allowing us to apply it recursively.
Expanding on the previous example, we could imagine that our experimental apparatus can undergo any number of errors from a given list $E_1, \ldots, E_k$, and that each of them can occur independently w.p. $\varepsilon_1, \ldots, \varepsilon_k$ (\eg~a bit-flip might occur at any point of a long computation).

For the sake of this example, we will neglect terms of order $\sum_i \varepsilon_i ^2 \approx k \varepsilon^2$, but not terms of order $\left(\sum_i \varepsilon_i\right)^2 \approx k^2 \varepsilon^2$ (where $\varepsilon$ is the order of magnitude of the individual error probabilities $\varepsilon_i$).
In particular we will neglect any bias due to the event of probability $\leq k\varepsilon^2$ that the same error $E_i$ occurs twice (once because of the intentional errors prescribed by the QP sampling and a second time due to the noise in the system).
Moreover, we assume that by intentionally causing one error $E_i$, we do not affect the probability that any other error $E_j$ will occur.

A standard QP sampling for this scenario would look like this:
\begin{enumerate}
    \item Select a tuple of independently distributed bits $\left(b_i \leftarrow \text{Bernoulli}(\varepsilon_i)\right)_{1 \leq i \leq k}$
    \item Draw a sample $s$ from the experimental apparatus, intentionally causing each error $E_i$ if $b_i = 1$.
    \item Compute the sign $(-1)^{b_1 \oplus \cdots \oplus b_k}$.
    \item Rescale the result by a factor of $\gamma = \prod_i \left(1 - 2 \varepsilon_i\right)^{-1}$.
    \item Output the signed and rescaled sample $s_{\text{QP}} = (-1)^{b_1 \oplus \cdots \oplus b_k} \gamma s $.
\end{enumerate}

In order to show that the output distribution is unbiased (up to an error of order $\sum_i \varepsilon_i ^2$), we consider any non-empty set of possible errors $E_{i_1}, \ldots, E_{i_\ell}$ that may occur (with some of the errors occurring organically through the noisy process and some due to the QP sampling above).
Because any combination of ``intentional'' and ``unintentional'' error that produce this specific pattern is equally likely, conditioned on this specific pattern of errors, the number of intentional errors $n_b = b_{i_1} + \cdots + b_{i_\ell} \leftarrow \text{Bin}(1/2, \ell)$ is binomially distributed.

From here, it is easy to show that the sign associated with this QP sample $(-1)^{n_b} \leftarrow (-1)^{\text{Bin}(1/2, \ell)}$ has expectation $0$.
Therefore for any non-empty set of errors $E_{i_1}, \ldots, E_{i_\ell}$, their contribution to the average of the QP samples is $0$.

Finally, because each step of the process adds an error w.p. $2\varepsilon_i$, the ideal process is sampled only with probability $\gamma^{-1} = \prod_i (1 - 2\varepsilon_i)$, requiring the rescaling step.

\paragraph{Sampling Overhead}
Naturally, removing noise from our samples must come at some cost.
In particular, much like other quantum error mitigation techniques, QP comes with a ``sampling overhead''.

Assume that the ideal and noisy distributions are all bounded (\eg~that all the samples from the device are within the segment $[-1, 1]$).
Then $\Theta\left(\Delta^{-2}\right)$ samples suffice to estimate the expectation $\mu_S = \langle S \rangle$ to within an error of $\pm\Delta$.

However, the rescaling step multiplies our output by a factor of $\gamma = \prod_i \left(1 - 2 \varepsilon_i\right)^{-1} \approx \exp(2k\varepsilon)$.
Therefore, the variance of the QP samples could be of order $\Theta(\gamma^2)$, requiring $\Theta\left(\gamma^2 \Delta^{-2}\right)$ samples in order to obtain the same $\pm \Delta$ approximation of the expectation $\mu_S$ of the ideal process.
This exponential $\gamma^2 \approx \exp(4 k \varepsilon)$ factor to the sample complexity is often called the sampling overhead of the QP method.

\subsection{Quasi-Probability Distributions}
The QP sampling procedure used in the examples above might seem somewhat ad-hoc, and we would want our design of an error mitigation protocol to be more automated.
We do this using the framework of quasi-probability distributions.

A probability distribution over a discrete set $X$ can be viewed as a mapping $p: X \rightarrow [0, 1]$, assigning a probability $p_x$ to each event $x \in X$.
The probabilities must be non-negative $\forall x \in X\;\; p_x \geq 0$ and their total must be equal to $\sum_{x\in X} p_x = 1$.

Similarly, we define a {\em quasi-probability distribution} over a discrete set $X$ as a mapping $q: X \rightarrow \R$, assigning a {\em quasi-probability} $q_x$ to each event $x \in X$.
The difference between probabilities and quasi-probabilities is that quasi-probabilities may be negative and are not necessarily bounded by $1$.

This difference is often measured with the {\em negativity} of a quasi-probability distribution which is defined as the sum of its negative values, or using the {\em quasi-probability norm} which is defined as the sum of the absolute values of the quasi-probabilities and is often denoted by $\gamma$ or $W$.
\[
\text{QP Norm} = \gamma \defeq \sum_x \abs{q_x} = 1 + 2\cdot \text{Negativity} \geq 1
\]

If $q$ is a classical probability distribution, then its negativity is $0$ and its norm is $1$.
If $q$ is a quasi-probability, then its negativity and norm can be arbitrarily large.

Quasi-probability distributions are ubiquitous in the study of quantum mechanics (see~\eg~\cite{gherardini2024quasiprobabilities}), but for our purposes, we focus mainly on their application to quantum error mitigation~\cite{temme2017error, piveteau2022quasiprobability}.

Consider the error channels in the examples above.
Each of these channels causes an error $E_i$ with probability $\varepsilon_i$ and does nothing with probability $1 - \varepsilon_i$.
In Section~\ref{sec:PEC}, we show that in some sense, the inverse of these lossy channels can be thought of as quasi-probabilistic channels which causes the errors $E_i$ with quasi-probabilities $-\varepsilon_i$ and do nothing with quasi-probabilities $1+\varepsilon_i$.
These quasi-probability distributions have QP-norm $\gamma_i = 1+2\varepsilon_i$.

In this framework, the QP method performs a Monte-Carlo simulation of precisely this quasi-probability distribution.
Each event (in this case, our choice of the bits $b_i$ that decide whether or not to apply an intentional error) is selected with probability 
\[
p_{\vec{b}} = \frac{\abs{q_{\vec{b}}}}{\gamma} = \prod_{1 \leq i \leq k} \frac{\abs{q_{b_i}}}{\gamma_i} = \prod_i p_{b_i}
\]
and the output is multiplied by a sign and rescaling factor of
\[
\frac{q_{\vec{b}}}{p_{\vec{b}}} = (-1)^{b_1 \oplus \cdots \oplus b_k} \gamma .
\]

This method of performing a Monte-Carlo simulation of quasi-probability distributions is fairly standard in quantum information processing, so its adaptation to quantum error mitigation is natural.

\subsection{Organization of the Paper}
In Section~\ref{sec:trace_reconstruction}, we give a more detailed overview of recent trace reconstruction literature, providing the motivation for and definition of the $k$-mer estimation problem.
In Section~\ref{sec:PEC}, we give a classical description of the quasi-probability method, which is the main tool we use to construct our $k$-mer estimation algorithm in Section~\ref{sec:pec_for_insdel}.
Finally, in Section~\ref{sec:linear_programming} we show that this $k$-mer estimation algorithm yields a time and sample efficient trace reconstruction algorithm.

\section{Trace Reconstruction and \texorpdfstring{$k$}{k}-mer Estimation}\label{sec:trace_reconstruction}
We begin our analysis by giving an overview of recent trace reconstruction algorithms, and how $k$-mer estimation oracles fit in them.

\subsection{Overview of Trace Reconstruction Algorithms}

Chase~\cite{chase2021separating} derived an algorithm for reconstructing any input string $\bfx$ from $\exp\left(\wt{O}(n^{1/5})\right)$ traces, improving on the works of De et al. and Nazarov and Peres~\cite{de2017optimal,nazarov2017trace} who show that $\exp\left(O(n^{1/3})\right)$ traces are necessary and sufficient for ``mean-based'' trace reconstruction algorithms.
In this section we will try to first give a very high level overview of the ideas behind these results using the language of Fourier analysis.

\paragraph{Mean-Based Analyses}
In all $3$ of these results, the analysis is based on bounding the number of traces needed to distinguish between any two strings $\bfx, \bfy \in \{0, 1\}^n$.
In mean-based trace reconstruction, one considers the mean vector $\mu_i^\bfx = \Expect{\wt{\bfx}\sim \mcC(\bfx)}{\wt{\bfx}_i}$ whose $i$th coordinate is the expectation over the $i$th bits of each trace.

Clearly, given sufficiently many traces, we may estimate each coordinate in $\mu$ to a high degree of accuracy.
De et al. and Nazarov and Peres bound the sample complexity of mean-based trace reconstruction by bounding the minimal separation between any two mean vectors $\min_{\bfx \neq \bfy \in \bits^n} \Norm{\mu^\bfx - \mu^\bfy}$.
Linearity gives us $\mu^\bfx - \mu^\bfy = \mu^{\bfx - \bfy}$, reducing their analysis to the question
\[
\min_{\Delta \in \set{0, \pm 1}^n \setminus \set{0^n}} \Norm{\mu^\Delta} = ?
\]

The key idea in this mean-based analysis is to show that $\mu^\Delta$ can be seen as a ``smoothed'' version of this difference $\Delta$.
In other words, each coordinate in $\mu^\Delta$ is the average over a range of entries of $\Delta$.

In the Fourier space, smoothing is equivalent to ``weighing down'' the higher frequencies of the input function. Defining $p_\Delta(z) = \sum_i \Delta_i z^i$, De et al. and Nazarov and Peres use a series of complex analysis results due to Borwein and Erd\'{e}lyi~\cite{borwein1997littlewood} and Borwein et al.~\cite{borwein1999littlewood} to show that:
\begin{equation}
    \label{eq:borwein_erdlyi}
    \min_{\Delta \in \set{0, \pm 1}^n \setminus \set{0^n}} \set{\max_{\theta \in [-\alpha, \alpha]} \abs{p_\Delta \left(e^{i\theta}\right)}} = \exp \left( - \Theta\left( \alpha^{-1} \right) \right)
\end{equation}

Combined with the smoothing which weighs down the coefficient of a frequency $\alpha$ by a factor of
\[
\abs{\frac{p_{\mu^\Delta}(e^{i\alpha})}{p_{\Delta}(e^{i \alpha})}} = \left(1 -\Theta(\alpha^2)\right)^n = \exp\left(-\Theta(\alpha^2 n)\right)
\]
they achieve the best separation when setting $\alpha = n^{-1/3}$, yielding a formula of the form
\[
\min_{\Delta \in \set{0, \pm 1}^n \setminus \set{0^n}} \Norm{\mu^\Delta} \approx \min_{\Delta \in \set{0, \pm 1}^n \setminus \set{0^n}} \set{\max_{\theta \in [-\pi, \pi]} \abs{p_{\mu^\Delta}(e^{i\theta})}} = \exp \left( -\Theta(n^{1/3}) \right)\,,
\]
where the $\approx$ may suppress a $\poly(n)$ factor.

\paragraph{Sparsifying the Input String}

A general rule of thumb with Fourier transformations, is that the sparser the input signal, the more the Fourier weights concentrate on the lower frequencies.
In particular, Chase refines equation~\eqref{eq:borwein_erdlyi}, proving that if the minimization is limited to only ``sparse'' vectors $\Delta$, then $\abs{p_\Delta \left(e^{i\theta}\right)}$ obtains much larger values.

In order to produce a sparse signal, Chase combines two ideas.
The first is a Lemma by Robson (see Lemma~\ref{lem:sparse_substring}), which promises that for any two strings $\bfx$ and $\bfy$ of length $n$, we can find a marker $\bfw$ of length $k$, for which the set of indices where $\bfw$ appears in $\bfx$ and $\bfy$ differ and are both sparse.
In other words, for some choice of $\bfw \in \bits^k$, the signal $\Delta = \left(1_{\bfx_{i:i+k} = \bfw} - 1_{\bfy_{i:i+k} = \bfw} \right)$ is sparse and non-zero.

\begin{definition}[Periods of a String]
Let $\bfw \in \bits^k$ be a string.
We say that $\bfw$ has period $p$ for some $1 \leq p \leq k$, if $\bfw_{:k-p} = \bfw{p:}$.
\end{definition}

\begin{lemma}[Lemmas 1 and 2 of \texorpdfstring{\cite{robson1989separating}}{Rob89}]
\label{lem:sparse_substring}
Let $S\in \{0, 1\}^k$ be a string of bits.
\begin{enumerate}
    \item If $S\mid 0$ (defined as the concatenation of $0$ to the end of $S$) has period $\leq k/2$, then $S\mid 1$ has a period of at least $k/2$.
    \item If $S$ has period $p$ and a string $U$ contains two occurrences of $S$ starting at $U_i$ and at $U_j$, then $\abs{i-j} \geq p$
\end{enumerate}
\end{lemma}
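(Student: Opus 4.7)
The plan is to reduce both parts to a single combinatorial fact about periods, namely the Fine--Wilf theorem: if a string of length $n$ has periods $p$ and $p'$ with $p + p' \leq n + \gcd(p, p')$, then $\gcd(p, p')$ is also a period of that string. Throughout, I will interpret ``$S$ has period $p$'' in the strong sense (i.e., $p$ is the smallest period of $S$), which is the reading consistent with the way Chase's algorithm uses the lemma later.

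\emph{Part 1.} I would argue by contradiction: suppose $S \mid 0$ has period $p \leq k/2$ and $S \mid 1$ has period $p' < k/2$. First note that if a string of length $k+1$ has period $r$, then the length-$k$ prefix $S$ also satisfies $S_i = S_{i+r}$ for all $0 \le i \le k-1-r$, so both $p$ and $p'$ are periods of $S$ in the sense of the definition. Second, the last-character constraint from the period condition applied to $S \mid 0$ gives $S_{k-p} = (S \mid 0)_{k-p} = (S \mid 0)_k = 0$, and analogously $S_{k-p'} = 1$. Now $p + p' < k \leq k + \gcd(p,p')$, so by Fine--Wilf the value $d := \gcd(p, p')$ is also a period of $S$. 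Since $d \mid p$ and $d \mid p'$, the two indices $k - p$ and $k - p'$ differ by a multiple of $d$ and both lie in $[0, k-1]$, so iterating the relation $S_i = S_{i+d}$ along the chain between them yields $S_{k-p} = S_{k-p'}$, contradicting $0 \neq 1$.

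\emph{Part 2.} Here I would give a direct overlap argument. Assume WLOG $i < j$, and suppose for contradiction $j - i < p$. The two occurrences $U_{i : i+k} = S = U_{j : j+k}$ overlap on the positions $j \le \ell < i + k$. For any such $\ell$, writing $m = \ell - j$, we get
\[
S_m = U_{j + m} = U_\ell = U_{i + (m + (j-i))} = S_{m + (j-i)}.
\]
This holds for $0 \le m \le i + k - j - 1 = k - 1 - (j-i)$, which is exactly the statement that $j - i$ is a period of $S$. But $j - i < p$ contradicts the minimality of $p$.

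\emph{Main obstacle.} The essentially combinatorial nature of both claims makes the proof short; the only subtlety is the bookkeeping in Part 1, namely verifying both that both $p$ and $p'$ are periods of the truncation $S$ and that the two ``boundary'' indices $k - p$ and $k - p'$ are actually connected by the shorter period $d$ via valid applications of $S_i = S_{i+d}$ (i.e., that the chain of indices stays inside $[0, k-1]$). Once this is checked, Fine--Wilf does the rest, and Part 2 is essentially immediate from writing down the overlap.
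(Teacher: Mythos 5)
The paper does not give a proof of this lemma; it is stated as a direct citation of Robson (1989), so there is no in-paper argument to compare your proof against. Your self-contained proof is, however, correct. Two remarks. First, you are right that the lemma only makes sense if ``has period $p$'' in Part 2 is read as ``$p$ is the minimal period'': under the paper's literal definition (any $1\le p\le k$ with $\bfw_{:k-p}=\bfw_{p:}$ is a period), Part 2 fails outright, e.g.\ $S=00$ has period $2$ yet two occurrences in $U=000$ lie one apart. Flagging the minimal-period reading is therefore not a cosmetic choice but a necessary one, and it is consistent with how Part 1's conclusion (``has a period of at least $k/2$'') must be read to be non-vacuous. Second, your Part 1 via Fine--Wilf is sound: from $p\le k/2$ and $p'<k/2$ you get $p+p'<k$, so $k\ge p+p'-\gcd(p,p')$ and Fine--Wilf applies to the length-$k$ prefix $S$; both $p,p'$ restrict to periods of $S$, the boundary instances of the period condition at $i=k-p$ and $i=k-p'$ give $S_{k-p}=0$ and $S_{k-p'}=1$, and the $d$-step chain between $k-p$ and $k-p'$ indeed stays in $[0,k-1]$ because both endpoints exceed $k/2-1\ge 0$ and the top index is at most $k-1$, yielding the contradiction $0=1$. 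Part 2 is the standard overlap argument, with the $j-i\ge k$ case trivially satisfying the bound and $1\le j-i<k$ producing a period $j-i$ of $S$ smaller than $p$; that is exactly right.
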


The final component needed for Chase's algorithm is a method of extracting a smoothed version of this sparse signal $\Delta$.
This is where the $k$-mer estimation problem comes in.

\subsection{The \texorpdfstring{$k$}{k}-mer Estimation Problem}

Mazooji and Shomorony~\cite{mazooji2024substring} define the $k$-mer estimation problem as the problem of estimating a ``smoothed'' version of the indicator vector of the indices where a given substring $\bfw$ appears in the input string $\bfx$.
However, their definition depends on the specific choice of a smoothing kernel and this choice is itself dependent on the specific noisy channel.

We deviate slightly from their definition, and instead define $k$-mer estimation as the task of estimating the low frequencies of the indicator set of the appearances of the string $\bfw$ in the input string $\bfx$.
More formally,

\begin{definition}[The \texorpdfstring{$k$}{k}-mer Estimation Problem]
    \label{def:kmer}
    We say that an algorithm $\mcA$ solves the $k$-mer estimation algorithm for error channel $\mcC$ frequency $\alpha \in [-\pi, \pi]$, up to error $\varepsilon$, sample complexity $\NumSamples$ and failure probability $\delta$, if
    \[
    \forall \bfx \in \bits^n \;\;\;\; \Prob{\wt{\bfx}^1, \ldots, \wt{\bfx}^\NumSamples \sim \mcC(\bfx)}{\abs{\mcA\left(\bfw, \wt{\bfx}^1, \ldots, \wt{\bfx}^\NumSamples\right) - p_{\left( 1_{\bfx_{i:i+k}=\bfw} \right)_{i \in [n-k]}}(e^{i\alpha})} < \varepsilon} > 1 - \delta
    \]
\end{definition}

We will measure the effectiveness of $k$-mer estimation algorithms by the scaling of their sample complexity as a function of the other parameters.
Our goal will be to construct a $k$-mer estimation algorithm with a ``good'' sample complexity:

\begin{definition} [Good \texorpdfstring{$k$}{k}-mer Estimation Algorithm]
    We say that a $k$-mer estimation algorithm $\mcA$ has a {\em good sample complexity} if
    \[
\NumSamples = \underbrace{\left(1 + O\left(\alpha^2\right) \right)^n}_{\textnormal{smoothing overhead}} \times \underbrace{\poly\left(\varepsilon^{-1}\right)}_{\textnormal{mean estimation cost}} \times \underbrace{\polylog\left(1 / \delta\right)}_{\textnormal{failure probability overhead}}
\]
\end{definition}

This is the sample complexity achieved by leading $k$-mer estimation algorithms and we do not expect to directly improve upon it (in particular, any significant improvement would also imply a better sample complexity for the trace reconstruction problem), justifying the label ``good''.
This leads us to the main technical theorem we will prove in this paper

\begin{theorem}[General \texorpdfstring{$k$}{k}-mer Estimation Algorithm]
\label{thm:kmer_estimation}
    For any set of channels $\mcC_1, \ldots, \mcC_L$, each of which may be either a deletion, insertion or symmetry channel, there exists a good $k$-mer estimation algorithm for the composite channel $\mcC = \mcC_1 \circ \cdots \circ \mcC_L$.
\end{theorem}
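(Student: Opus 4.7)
The plan is to reduce Theorem~\ref{thm:kmer_estimation} to constructing a good $k$-mer estimation algorithm for each of the three atomic channel types (deletion, insertion, and symmetry) and then combining them via the self-composability of the QP method. Concretely, once the framework of Section~\ref{sec:PEC} supplies a generic recipe for turning a QP distribution for ``$\mcC^{-1}$'' into an unbiased Monte-Carlo estimator with sample overhead quadratic in the QP norm $\gamma$, I would spend most of the effort producing a QP inverse for each atomic channel whose norm is $\gamma = (1 + O(\alpha^2))^n$, and only then chain these atomic pieces together.

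For each atomic channel, the naive full QP inverse---one that restores a noiseless sample position-by-position---has per-position norm bounded away from $1$, and hence total norm $\exp(\Theta(n))$, which is far too large. The central refinement I would rely on is that the target observable $p_\Delta(e^{i\alpha})$ with $\Delta_i = 1_{\bfx_{i:i+k}=\bfw}$ only probes the frequency-$\alpha$ Fourier coefficient of $\Delta$. This lets me split each atomic channel into a ``benign'' low-pass smoothing component---whose Fourier symbol at frequency $\alpha$ differs from $1$ only by $O(\alpha^2)$ per affected position---and a residual component that genuinely needs to be QP-inverted. For the symmetry channel the low-pass component is the deterministic rescaling by $1-2\sigma$ at each position; for the deletion channel it is the classical mean-based smoothing kernel underlying the Nazarov--Peres and De~et~al.\ analysis; and for the insertion channel it is an analogous kernel arising from the geometric insertion profile. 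After inverting only the residual, each position contributes QP overhead $1 + O(\alpha^2)$, which compounds to the claimed $(1 + O(\alpha^2))^n$ norm.

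Finally, given atomic QP inverses $\mcQ_1, \ldots, \mcQ_L$, I would build a QP inverse for $\mcC = \mcC_1 \circ \cdots \circ \mcC_L$ by drawing independent QP samples at each layer and multiplying the signs and rescaling factors together, exactly as in Example~2 of the introduction. Multiplicativity of QP norms then gives total norm $\prod_\ell \gamma_\ell = (1 + O(L\alpha^2))^n$, which is still good once $L$ is suppressed per the paper's convention of hiding channel parameters inside the big-O notation. The main obstacle I anticipate is the construction in the second paragraph: verifying that the proposed split of each atomic channel into a low-pass part and an inverted residual both~(a)~produces an efficiently samplable QP distribution with explicit, finite support, and~(b)~yields an estimator that is genuinely unbiased for $p_\Delta(e^{i\alpha})$. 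This is most delicate for insertion channels and for deletion channels in the regime $\delta \geq 1/2$, where the standard mean-based reasoning breaks down and one has to carefully track how Fourier weight migrates between input positions and trace positions under composition.
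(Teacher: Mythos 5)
Your diagnosis that a naive position-by-position QP inverse applied across all $n$ positions has norm $\exp(\Theta(n))$ is correct, but the fix you propose is not the paper's and contains a real gap. You want to split each atomic channel, at frequency $\alpha$, into a benign low-pass smoothing component and a residual that still needs QP inversion, so that the per-position norm drops to $1 + O(\alpha^2)$. You correctly flag this split as the main obstacle, and it is: for the deletion and insertion channels (especially with $\delta \geq 1/2$), there is no obvious way to realize this split as an explicit, efficiently samplable, finite-support QP distribution, and for the symmetry channel -- which shifts no positions at all -- the $1+O(\alpha^2)$ Fourier-symbol structure you invoke does not even arise (its effect on a length-$k$ window is a flat $\Theta(1)^k$ scaling, not a frequency-dependent smoothing).

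The paper sidesteps this entirely by \emph{never} QP-inverting across $n$ positions. Its architecture is two-layered. First, the QP inversion is applied only to the $k$ channel steps that touch a length-$k$ window: this produces a local oracle $f_{\QP(\mcC)}$ whose per-bit QP norm is a constant ($\frac{1}{1-2\delta}$, $\frac{1+\eta}{1-\eta}$, or $\frac{1}{1-2\sigma}$), giving total norm $\exp(O(k))$ -- affordable because the precision $\varepsilon$ demanded of a good $k$-mer estimator in the trace-reconstruction application is already exponentially small, so this factor is absorbed into $\poly(\varepsilon^{-1})$. Second, the positional shifts caused by insertions and deletions -- which is where the $(1 + O(\alpha^2))^n$ smoothing overhead actually lives -- are handled completely \emph{outside} the QP framework: the empirical means $\mu_j$ at each trace position $j$ are summed with complex weights $z^j$, where $z = G^{-1}_{\abs{\mcC}}(e^{i\alpha})$ inverts the generating function of the channel's per-bit length distribution. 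Since $\abs{z} = 1 + O(\alpha^2)$, the $\abs{z}^n$ factor supplies the smoothing overhead. In short, the paper separates window-content correction (a small, constant-norm QP inversion) from window-position correction (a classical generating-function argument as in mean-based analyses), whereas you are attempting to pack both into one large QP inversion and then tame its norm by a spectral decomposition that has not been constructed.
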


We will construct the algorithm for Theorem~\ref{thm:kmer_estimation} using the quasi-probability method.
Before explaining why this method works, we can already present the algorithms that it will help us derive. 
At first, the design of Algorithms~\ref{alg:qp_single_insdel} and~\ref{alg:qp_kmer_estimation} will seem odd, and it might not be immediately clear why they should work.

In Section~\ref{sec:PEC} we introduce the QP method, and in Section~\ref{sec:pec_for_insdel}, we will use it to derive Algorithms~\ref{alg:qp_single_insdel} and~\ref{alg:qp_kmer_estimation} and prove that they work as intended.

For now, we note the relative simplicity of the Monte Carlo method in Algorithm~\ref{alg:qp_kmer_estimation} -- especially when compared with the complex analysis used by Chase~\cite{chase2021separating} and the numerical differentiation introduced by Rubinstein to deal with insertion channels~\cite{rubinstein2022average}.

\begin{algorithm}[!ht]
\label{alg:qp_single_insdel}

\KwIn{An oracle $f:\bits^k \rightarrow [-C, C]$, an error channel $\mcC$, an accuracy parameter $\varepsilon$}
\KwOut{An oracle $f_{\QP(\mcC)} : \bits^{k^\prime} \rightarrow [-C^\prime, C^\prime]$, such that $C^\prime = 2^{O(k)} \poly(\varepsilon^{-1}) C$ and $\abs{\Expect{\bfx \sim \mcX}{f(\bfx)} - \Expect{\substack{\bfx \sim \mcX \\ \wt{\bfx} \sim \mcC(\bfx)}}{f_{\QP(\mcC)}(\wt{\bfx})}} < \varepsilon C$}

\BlankLine

Set $k^\prime \coloneqq \begin{cases}
    k + 4 \cdot \frac{\delta}{1-2\delta} k + 4 \cdot \frac{\delta}{1 - \delta} \Paren{\log\Paren{\frac{1}{1 - 2\delta}} k + \log\left(\varepsilon^{-1}\right)} & \mcC = \DelChannel_\delta\\
    k & \textup{otherwise}
\end{cases}$ \;

\SetKwProg{Fn}{\texttt{def} $f_{\QP(\mcC)}\left(\wt{\bfx}\right)$ \texttt{:}}{}{}

\Fn{}{
\If{$\mcC = \DelChannel_\delta$}{
    set $\gamma = \frac{1}{1 - 2\delta}$ \;
    
    \For{$i \gets 1$ \KwTo $k$}{
        draw $j_i \sim \Geom\left(1 - \frac{\delta}{1 - \delta} \right)$ \;
        delete $j_i$ bits of $\wt{\bfx}$ from the $i$th bit onwards\;
        set $s_i = (-1)^{j_i}$ \;
        \If{$j_1 + \cdots + j_i > k^\prime - k$}{
            Halt and return 0 \;
        }
    }
}

\If{$\mcC = \InsChannel_\eta$}{
    set $\gamma = \frac{1+\eta}{1-\eta}$ \;
    
    \For{$i \gets 1$ \KwTo $k$}{
        draw $j_i \sim \Ber\left( \frac{\eta}{1-\eta} \right)$ \;
        insert $j_i$ iid uniformly distributed bits before the $i$th bit of $\wt{\bfx}$ \;
        set $s_i = (-1)^{j_i}$ \;
    }
}

\If{$\mcC = \textup{Symmetry Channel}_\sigma$}{
    set $\gamma = \frac{1}{1-2\sigma}$ \;
    
    \For{$i \gets 1$ \KwTo $k$}{
        draw $j_i \sim \Ber\left( \frac{\sigma}{1-2\sigma} \right)$ \;
        set $\wt{\bfx}_i \coloneqq \wt{\bfx}_i \oplus j_i $ \;
        set $s_i = (-1)^{j_i}$ \;
    }
}

Output $f\left(\wt{\bfx}\right) \times \gamma^k \times \prod_{i = 1}^k s_i$

}

\caption{QP Decomposition of a Simple Channel}
\KwRet $f_{\QP(\mcC)}$ \;
\end{algorithm}

\begin{remark}

When the error channel is a deletion channel, the QP estimator $f_{\QP(\mcC)}$ defined in Algorithm~\ref{alg:qp_single_insdel} may delete some of the bits of its input, so if it were implemented naively, this could cause its output to depend on more than the first $k$ bits of the input (which would violate the claim that $f_{\QP(\mcC)}$ depends on only the first $k^\prime$ bits of its input -- see Lemma~\ref{lem:qp_single_insdel}). 

By setting $k^\prime$ to be sufficiently large compared to $k + \frac{\delta}{1 - 2\delta} k$ (which is the expected number of bits of the input read by $f_{\QP(\mcC)}$) in line $1$ of Algorithm~\ref{alg:qp_single_insdel}, we ensure that the buffer length is exceeded only with vanishingly small probability (see Claim~\ref{clm:no_overflows}).
Lines $9$ and $10$ of the algorithm check that we indeed did not have such a ``buffer overflow'', and the condition is triggered with a negligible probability, resulting in a bias of at most $\varepsilon$ in the estimator.

\end{remark}

\begin{algorithm}[!ht]
\label{alg:qp_kmer_estimation}

\KwIn{Error channel $\mcC = \mcC_1 \circ \cdots \circ \mcC_L$, traces $\wt{\bfx}^1, \ldots, \wt{\bfx}^\NumSamples$, frequency $\omega \in [-\pi, \pi]$, marker $\bfw \in \bits^k$}
\KwOut{\( k \)-mer Estimator $E \approx \sum_{j \in [n]} e^{ij\omega} 1_{\bfx_{j:j+k} = \bfw}$}

\BlankLine

Define the oracle $f^0: \bits^k \rightarrow [-1, 1]$ to be:
\[
f^0(\bfx) = 1_\bfw = 
\begin{cases} 
1 & \bfx = \bfw \\ 
0 & \textup{otherwise} 
\end{cases}
\]

\For{$j \gets 1$ \KwTo $L$}{
    \If{$\mcC_j = \DelChannel_\delta$ with $\delta \geq 1/2$}{
        decompose $\mcC_j$ into deletion channels $\mcC_j = \mcC_{j, 1} \circ \cdots \circ \mcC_{j, O(\log((1-\delta)^{-1}))}$ with deletion probability $\delta^\prime < 1/3$\;
    }
    Apply the QP Decomposition Algorithm~\ref{alg:qp_single_insdel} to update the oracle $f^j \coloneqq f^{j-1}_{\QP(\mcC_j)}$\;
}

\For{$j \gets 1$ \KwTo $n$}{
    Compute the empirical average:
    \[
    \mu_j = \frac{1}{\NumSamples} \sum_{t \in [\NumSamples]} f^L(\wt{\bfx}^t_{j:})
    \]
}

Let $\zeta = e^{i \omega}$ and $z = G^{-1}_{\abs{\mcC}}(\zeta)$ (as defined in Section~\ref{subsec:generating_funcs}).

\Return $E = \zeta^{-1} \sum_{j \in [n]} \mu_j z^j$

\caption{QP-based $k$-mer Estimation Algorithm}
\end{algorithm}

\section{The Quasi-probability Method}\label{sec:PEC}
In this section we will give a detailed overview of the Quasi-Probability (QP) method, also known as Probabilistic Error Cancellation (PEC).
While this method was initially designed to be a method for simulating noise-free quantum computations on noisy quantum devices~\cite{temme2017error} and is often presented using terminology from quantum information theory, we show that its main ideas translate nicely into classical computing and no background in quantum is needed to understand this section.
Where relevant, we will mention the relationship between our classical description and the quantum description of this method.

\subsection{High Level Idea}

PEC takes as input a description of some ideal process we want to simulate and of the inherent noise in the system we use to simulate it.
As our main example, we imagine that the description is some classical circuit we wish to run on a classical computer, that our goal is to find the first bit of its output (or its expectation if the circuit is not deterministic), but that we are given access only to a faulty computer.
 
Let $\Lambda$ be the lossy noise channel applied to our state before each step of this process.
We view $\Lambda$ as an $\NumStates\times \NumStates$ stochastic matrix\footnote{In the quantum setting, $\Lambda$ would be a ``superoperator'', defined by a $\NumStates^2 \times \NumStates^2$ complex matrix.}, representing one Markovian step on all $\NumStates$ possible states of the process.
In our circuit example, $\NumStates=2^n$ represents the number of possible computational states, and $\Lambda$ might be a symmetry channel being applied to some bits of the state.
 
Conceptually, we would like to correct our process by inserting intermediate steps which somehow apply the inverse channel $\Lambda^{-1}$ to our state.
However, $\Lambda^{-1}$ is ``non-Physical'', meaning that such a set of operations cannot exist.
This is because $\Lambda$ is lossy, implying that $\Lambda^{-1}$ must somehow add more information into the system.
 
To overcome this, we introduce a concept called {\em quasi-probability distributions}.
We define a quasi-probability (QP) distribution to be a map $q: U \rightarrow \R$ which assigns a real value $q_u$ to each member $u$ of some discrete set of events $U$.
Standard probability distributions $p$ are defined with the constraints that their total weight $\sum_u p_u = 1$ is $1$ and that each entry is non-negative $\forall u\;\;p_u \geq 0$.
On the other hand, quasi-probability distributions relax the second condition allowing for negative weights $q_u < 0$.

Quasi-probability distributions can produce ``impossible'' channels like $\Lambda^{-1}$ because they do not adhere to conventional information theory.
We utilize this by first decomposing our desired channel $\Lambda^{-1}$ into a quasi-probability distribution of ``possible'' channels, and then use a Monte-Carlo simulation to estimate the expectation of properties of the resulting quasi-probabilistic process.

\subsection{Quasi-Probability Decomposition} 
The first step in the QP method is called a ``quasi-probability decomposition’’ \cite{piveteau2022quasiprobability}.

In this step we search for a set of operations represented by stochastic matrices $L_1, \ldots, L_k$ and real coefficients $q_1, \ldots, q_k$, such that as matrices $\Lambda^{-1} = \sum_i q_i L_i$.
The operations $L_i$ are called the ``decomposition set’’ and the coefficients $q_i$ represent a quasi-probability distribution over the choices of $L_i$.

This step is the only part of the QP method which is not constructive, and each error channel might require some ingenuity for selecting its decomposition set and QP distribution.
Typically, the set of operations in our decomposition set will be similar to the set of errors the channel can cause itself.
 
For instance, consider a symmetry channel $\Lambda = (1 - \sigma) I + \sigma X_1$ (where $X_1$ is the symmetry operation on the first bit). This channel flips the first bit with probability $\sigma < 1/2$.
 A simple calculation gives us that
\[
\Lambda^{-1} = \frac{1-\sigma}{1-2\sigma} I - \frac{\sigma}{1-2\sigma} X_1
\]

In fact, for any error channel $\Lambda$ which does nothing with probability $1 - \varepsilon$ and applies some general error $E$ with small probability $\varepsilon$, if we are allowed to neglect $O(\varepsilon^2)$ terms, we can easily show that $\Lambda^{-1} \approx (1+\varepsilon)I - \varepsilon E$.

Finally, when applying the QP method to insertion and deletion channels, we will show that they have ``good'' QP decompositions into insert-$k$ and delete-$k$ channels.

\subsection{Sampling from a QP Distribution}
After selecting a QP decomposition, the next step is to use it to compute the expectation value of some {\em observable} -- any function from the set of states of our process to the real numbers $M : [\NumStates] \rightarrow \R$.
The observable can also be viewed as a real vector $\vec{M} \in \R^\NumStates$.

Let $\rho \in \R^\NumStates$ be a vector representing the distribution over possible states of our system immediately after the channel $\Lambda$.
Morally, our goal is to apply some operation on the system so that its output will behave like $\rho^\prime = \Lambda^{-1} (\rho)$.

We do this by constructing the channel $\wt{\Lambda}$.
In principle, we want this channel to sample each process $L_i$ in the QP set with probability $q_i$.
If this were possible, then we would have the following equality when viewing the channels as transition probability matrices:
\[
\wt{\Lambda} = \sum_i q_i L_i = \Lambda^{-1}
\]

The problem with this approach is that $\vec{q}$ might not be a probability distribution (\ie~it may contain negative values).
To overcome this, we sample each $L_i$ with probability $p_i \propto \abs{q_i}$, and retain a factor of $f_i = p_i / q_i$.

For any observable $M$ mapping states of the process to real numbers, we simply multiply its output by the retained factor $f_i$, to obtain
\[
\langle M \rangle_{\QP} \defeq \E_{\substack{i\leftarrow p \\ x\leftarrow \rho \\ x^\prime \leftarrow L_i(x)} } \left[f_i M\left(x^\prime\right)\right] = \left\langle \vec{M}, \sum_i p_i f_i L_i \rho \right\rangle = \left\langle \vec{M}, \Lambda^{-1} \rho \right\rangle = \langle M \rangle_{\Lambda^{-1}}\,,
\]
where $\langle M \rangle_{\QP}$ is the expected output of the sampling procedure above, and $\langle M \rangle_{\Lambda^{-1}} \defeq \left\langle \vec{M}, \Lambda^{-1} \rho \right\rangle$ is the expectation of the observable $M$ on the output of the quasi-probabilistic channel that we want to estimate.

This means that the average over samples from the QP sampling is indeed equal to the average of our desired channel $\Lambda^{-1}$.
Naturally, we may apply QP recursively to invert multiple independent lossy channels\footnote{Known as a ``Markovian model'' in the language of quantum channels.} $\Lambda_1, \ldots, \Lambda_V$ spread throughout the noisy process.
To perform the QP method recursively, we simply insert the operation $L_{j, i}$ before the $j$th noisy step of our process independently w.p. $p_{j, i}$, and multiply our output by the product of all the factors $\prod_j f_{j, i}$.

\subsection{Sampling Overhead}

Note however, that there is no ``free lunch''.
We pay for the ability to simulate these ``non-Physical'' channels $\Lambda_j^{-1}$, through an increase in the variance of our samples. 

This is due to the factors $f_{j, i} = q_{j, i} / f_{j, i} = \text{sign}(q_{j, i}) \sum_{i^\prime} \abs{q_{j, i^\prime}}$.
We set $\sum_i q_i = 1$, but some of the $q_i$s are negative, so $\gamma_j \defeq \sum_i \abs{q_i^j}$ (called the QP norm) is greater than $1$.
This means that even if $\left \Vert M \right \Vert_\infty = C$ is a constant, the variance of our estimation of the expectation of $M$ through a PEC process can scale at most like
\[
\Var(M_\QP) \leq C^2 \prod_j \gamma_j ^ 2
\]

Returning to our example of running a circuit in the presence of noise, if we have $V$ gates in the circuit, and we assume each gate $G_j$ incurs a small probability $\varepsilon_j$ of some known error process $E_j$ being applied to the system, we can use PEC to correct the output of this circuit (to within an error of order $O(\sum_j \varepsilon_j^2)$) using the QP-decomposition $\Lambda_j^{-1} \approx (1+\varepsilon_j) I - \varepsilon_j E_j \pm O(\varepsilon_j^2)$.

Assuming $\left \Vert M \right \Vert_\infty = 1$, the variance of an estimate $M^*$ made by averaging over $\NumSamples$ samples of PEC would be
\[
\Var (M^*) = \Var(M_{\QP}) / \NumSamples \leq \prod_j \left(\abs{1+\varepsilon_j} + \abs{-\varepsilon_j}\right)^2 / \NumSamples \approx \exp\left(4\sum_j \varepsilon_j\right) / \NumSamples
\]

This means that in order to get a constant error bar on our estimate $M^*$, we may need to use $\NumSamples \approx \exp\left(4\sum_j \varepsilon_j\right)$ PEC samples.

Note that this scaling is tight in some cases.
For instance, if the ideal computation we wanted to simulate produced some distribution on bit-strings $w\in \{0, 1\}^n$, our goal was to compute the expectation of $x = w_1 \oplus \cdots \oplus w_n$ and the noise channel was an \iid~binary symmetry channel with parameter $\varepsilon$ applied to each of the output bits independently.
In this case, if $x$ had a bias $b$, then in the presence of noise its bias would degrade by a factor of $(1-2\varepsilon)^n$, and we would need $\Theta\left(\exp(4\varepsilon n) \delta^{-2}\right)$ samples to reconstruct it to within an additive error of $\delta$.

\subsection{Formal Statement and Proof}

\begin{definition}[Quasi-Probability Decomposition]
Let $\Lambda \in \R^{\NumStates \times \NumStates}$ be a stochastic matrix representing the Markov process of some lossy error channel.

We say that $\set{(q_j, \Lambda_j)}_{j \in [\ell]}$ is $\varepsilon$ away from being a {\em quasi-probability decomposition of $\Lambda^{-1}$}, with norm $\gamma \defeq \sum_{j = 1}^\ell \abs{q_j}$, if:
\begin{itemize}
    \item $q_1, \ldots, q_\ell \in \R$ are scalars and $\sum_{j = 1}^{\ell} q_j = 1$.
    \item $\Lambda_1, \ldots, \Lambda_\ell \in \R^{\NumStates \times \NumStates}$ are stochastic matrices.
    \item $\Norm{\Lambda^{-1} - \sum_{j = 1}^\ell q_j \Lambda_j}_{1, \infty} < \varepsilon$, where $\Norm{A}_{1, \infty} = \max_{i \in [\NumStates]} \set{\sum_{j \in [\NumStates]} \abs{A_{i, j}}}$ denotes the maximum over rows of $A$ of their $\ell_1$ norms.
\end{itemize}
\end{definition}

It is not always clear how to construct a ``good'' QP decomposition.
In Section~\ref{sec:pec_for_insdel}, we derive QP decompositions for several synchronisaiton channels, but first, we conclude with the main result of this section -- proving that a QP decomposition yields an estimator:

\begin{claim}
\label{clm:sample_qp}
Let $\Lambda$ denote the stochastic matrix corresponding to a lossy error channel, and let $(q_1, \Lambda_1), \ldots, (q_\ell, \Lambda_\ell)$ be $\varepsilon$ far from being a QP decomposition of $\Lambda^{-1}$ with norm $\gamma$.

Then for any function $f:[\NumStates] \rightarrow [-C, C]$, there exists a function $f_{\QP(\Lambda)}:[\NumStates] \rightarrow [-C^\prime, C^\prime]$, such that
\begin{itemize}
    \item $f_{\QP(\Lambda)}$ makes one call to one of the channels $\Lambda_1, \ldots, \Lambda_\ell$, and one call to an oracle that computes $f$.
    \item The output of $f_{\QP(\Lambda)}$ is bounded by $C^\prime = \gamma C$.
    \item The expectation of $f_{\QP(\Lambda)}$ on outputs of the noisy channel is close to the expectation on its input. In other words, for any distribution $\mcX$ on the state space $[\NumStates]$, we have
    \[
    \abs{\Expect{x \sim \mcX}{f(x)} - \Expect{\wt{x} \sim \Lambda(\mcX)}{f_{\QP(\Lambda)}(\wt{x})}} < \varepsilon C
    \]
\end{itemize}
\end{claim}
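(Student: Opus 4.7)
The plan is to write down the obvious Monte Carlo estimator suggested by the QP decomposition, and verify each of the three required properties in turn. On input $\wt{x}$, the randomized procedure $f_{\QP(\Lambda)}$ does the following: (i) sample an index $j \in [\ell]$ with probability $p_j \defeq \abs{q_j}/\gamma$; (ii) draw $x^\prime \sim \Lambda_j(\wt{x})$ using one call to $\Lambda_j$; (iii) query the oracle once to obtain $f(x^\prime)$ and return $\sgn{q_j}\cdot \gamma \cdot f(x^\prime)$. The ``one-call'' and boundedness properties are then immediate: step (ii) calls a single $\Lambda_j$, step (iii) calls $f$ once, and $\abs{f_{\QP(\Lambda)}(\wt{x})} = \gamma\abs{f(x^\prime)} \le \gamma C$, so we may take $C^\prime = \gamma C$.

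The substantive part is the bias bound. Writing $M \defeq \sum_{j=1}^{\ell} q_j \Lambda_j$, the definition of $p_j$ is engineered so that $p_j \sgn{q_j} \gamma = q_j$, which gives
\[
\E[f_{\QP(\Lambda)}(\wt{x})] \;=\; \sum_j p_j \cdot \sgn{q_j}\cdot \gamma \cdot \E_{x^\prime \sim \Lambda_j(\wt{x})}[f(x^\prime)] \;=\; \sum_j q_j \cdot \E_{x^\prime \sim \Lambda_j(\wt{x})}[f(x^\prime)].
\]
Treating $f$ as a column vector $\vec{f}$ with $\Norm{\vec f}_\infty \le C$ and distributions as row vectors (so the row-stochastic $\Lambda$ acts by right multiplication, and $\wt{x}\sim \Lambda(\mcX)$ has distribution $\mcX \Lambda$), averaging over $\wt{x}$ yields $\E_{\wt{x}\sim\Lambda(\mcX)}[f_{\QP(\Lambda)}(\wt{x})] = \mcX \Lambda M \vec{f}$. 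The target expectation is $\E_{x\sim\mcX}[f(x)] = \mcX \vec{f}$, so the discrepancy equals $\mcX\,(I - \Lambda M)\,\vec{f} = \mcX \Lambda\,(\Lambda^{-1} - M)\,\vec{f}$.

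Denote $E \defeq \Lambda^{-1} - M$ (so $\Norm{E}_{1,\infty} < \varepsilon$ by hypothesis) and $\sigma \defeq \mcX\Lambda$, which is a genuine probability row vector since $\Lambda$ is stochastic. Then
\[
\abs{\E_{x\sim \mcX}[f(x)] - \E_{\wt{x}\sim\Lambda(\mcX)}[f_{\QP(\Lambda)}(\wt{x})]} \;=\; \abs{\sigma E \vec{f}} \;\le\; \sum_i \sigma_i \sum_j \abs{E_{ij}}\abs{f_j} \;\le\; C \cdot \Norm{E}_{1,\infty} \cdot \sum_i \sigma_i \;<\; \varepsilon C,
\]
using $\sum_i \sigma_i = 1$ in the last step. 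This establishes the third bullet and completes the proof.

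The only ``obstacle'' is bookkeeping: one must be careful that the $(1,\infty)$ operator norm in the definition of an $\varepsilon$-close QP decomposition is precisely the norm dual to an $\ell_\infty$ bound on the observable $\vec f$ paired with an $\ell_1$-normalized distribution $\sigma$, so that the error estimate collapses to the product $C \cdot \varepsilon$ independently of the state-space size $\NumStates$. Everything else is a straightforward unrolling of the matrix identity $\Lambda M \approx I$ implied by the QP decomposition.
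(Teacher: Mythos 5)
Your proof is correct and takes essentially the same approach as the paper: define the Monte Carlo estimator suggested by the QP decomposition, check the one-call and boundedness properties trivially, and control the bias by pairing the observable (in $\ell_\infty$) against a probability vector (in $\ell_1$) through the $(1,\infty)$ norm of the decomposition error. If anything, your write-up is more careful than the paper's on the one point of bookkeeping you flag: you explicitly factor the discrepancy as $\mcX\,\Lambda\,(\Lambda^{-1}-M)\,\vec f$ and note that $\sigma=\mcX\Lambda$ is still a probability vector, whereas the paper's final display writes $\big(\Lambda^{-1}-\sum_j q_j\Lambda_j\big)^\intercal\vec{\mcX}$ without the intervening $\Lambda^\intercal$ — a harmless slip since $\Lambda^\intercal\vec{\mcX}$ has the same $\ell_1$ norm, but your version is the tidier one.
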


\begin{proof}[Proof of Claim~\ref{clm:sample_qp}]

We define the oracle $f_{\QP(\Lambda)}$ when run on the input $x$ to sample $j \in [\ell]$ from the distribution
\[
p_j = \frac{\abs{q_j}}{\sum_{i \in [\ell]} \abs{q_i}}\,,
\]
apply the Markov process $\Lambda_j$ on $x$ and then output
\[
f_{\QP(\Lambda)}(x) = \frac{q_j}{p_j} f\left(\Lambda_j(x)\right)\,.
\]

It's outputs are bounded by
\[
\abs{f_{\QP(\Lambda)}(x)} = \abs{\frac{q_j}{p_j}} \abs{f\left(\Lambda_j(x)\right)} \leq \sum_{j \in [\ell]} \abs{q_j} \cdot C = \gamma C = C^\prime\,.
\]

Finally, we have
\[
\EE{f_{\QP(\Lambda)}(x)} = \Expect{j \sim p_j}{\frac{q_j}{p_j} f\left(\Lambda_j(x)\right)} = \sum_{j = 1}^\ell p_j \frac{q_j}{p_j} \EE{f\left(\Lambda_j(x)\right)} = \sum_{j = 1}^\ell q_j \EE{f\left(\Lambda_j(x)\right)}\,,
\]
and from our assumption that $(q_1, \Lambda_1), \ldots, (q_\ell, \Lambda_\ell)$ is $\varepsilon$ far from being a QP decomposition of $\Lambda^{-1}$, we have
\begin{align*}
    \abs{\sum_{j = 1}^\ell q_j \EE{f\left(\Lambda_j(x)\right)} - \EE{f(x)}} &= \abs{\IP{\vec{f}}{\Paren{\Lambda^{-1} - \sum_{j \in [\ell]} q_j \Lambda_j}^\intercal \vec{\mcX}}} \leq\\
    &\leq \Norm{f}_\infty \Norm{\Paren{\Lambda^{-1} - \sum_{j \in [\ell]} q_j \Lambda_j}^\intercal \vec{\mcX}}_1 < \varepsilon C\,,
\end{align*}
where $\vec{f} = \Paren{f(1), \ldots, f(\NumStates)}$ and $\vec{\mcX}$ is the probability density function of $\mcX$.

\end{proof}

\section{Quasi-probability Decomposition of Insertion-Deletion Channels}\label{sec:pec_for_insdel}
In the previous sections, we introduced the trace reconstruction problem and the quasi-probability method.
In this section, we will apply the quasi-probability method to the error channels of the trace reconstruction problem.
In particular, we will use the quasi-probability method to derive algorithms for $k$-mer estimation and population recovery in the large support regime (proving Theorems~\ref{thm:kmer_estimation} and~\ref{thm:pop_recovery}).

Throughout this section, our goal will be to find a QP inversion of any combination of insertion and deletion channels.
Before approaching this general setting, we will design QP decompositions of individual insertion and deletion channels.
Moreover, we will limit this part of the analysis to the low deletion probability regime $\delta < 1/2$.

In particular, we will prove the following lemma:

\begin{lemma}
    \label{lem:qp_single_insdel}
    Let $\mcC$ be an error channel that is either a symmetry channel with parameter $\sigma < 1/2$, a deletion channel with deletion rate $\delta < 1/2$, or an insertion channel with insertion rate $\eta < 1$, and let $f:\bits^k \rightarrow [-C, C]$ be any bounded mapping from a length $k$ prefix of the input string to a scalar.
    Let $\varepsilon > 0$ be an accuracy parameter.

    Then, there exists a QP estimator $f_{\QP(\mcC)} : \bits^{k^\prime} \rightarrow [-C^\prime, C^\prime]$ such that
    \begin{itemize}
        \item $f_{\QP(\mcC)}$ takes as input a string of length $k^\prime = O(k + \log(\varepsilon^{-1}))$
        \item Its output is bounded in absolute value by $C^\prime = C \exp\Paren{O(k + \log(\varepsilon^{-1}))}$
        \item For any distribution of input strings $\mcX$, running $f_{\QP(\mcC)}$ on the prefix of the trace of a sample $\bfx \sim \mcX$ is a nearly unbiased estimator of the expectation of $f(\bfx_{:k})$:
        \[
        \abs{\Expect{\bfx \sim \mcX}{f\left(\bfx_{:k} \right)} - \Expect{\substack{\bfx \sim \mcX \\ \wt{\bfx} \sim \mcC(\bfx)}}{f_{\QP(\mcC)}\left(\wt{\bfx}_{:k^\prime} \right)}} \leq \varepsilon
        \]
    \end{itemize}
\end{lemma}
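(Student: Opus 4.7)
The plan is to handle each of the three channel types separately, in each case constructing a quasi-probability decomposition that inverts the noise on a length-$k$ prefix and then invoking Claim~\ref{clm:sample_qp}. The key observation is that although the channel acts on a full string of unbounded length, its effect on a fixed prefix is \emph{local}: the symmetry channel acts per bit, and for the insertion and deletion channels each prefix position $i$ is preceded by an independent $\Geom$-distributed run (of inserted, resp.~skipped-past deleted, bits). Thus I can invert the ``per-position'' effect and stack $k$ such QP steps.

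The symmetry and insertion cases admit clean closed-form QP decompositions. The per-bit symmetry channel $\Lambda = (1-\sigma)I + \sigma X$ has exact two-term inverse $\Lambda^{-1} = \frac{1-\sigma}{1-2\sigma} I - \frac{\sigma}{1-2\sigma} X$ with norm $\gamma = 1/(1-2\sigma)$. The per-position insertion process $(1-\eta)(I-\eta U)^{-1}$, where $U$ inserts one uniform bit, has exact inverse $\frac{1}{1-\eta}I - \frac{\eta}{1-\eta}U$ with norm $\gamma = (1+\eta)/(1-\eta)$. In both cases Claim~\ref{clm:sample_qp} applied once at each of the $k$ positions immediately yields an unbiased estimator with $k' = k$ and $C' = C\gamma^k$.

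The deletion case is the most involved, because the per-position forward process ``skip $\Geom(1-\delta)$ bits of $\bfx$ and read the next'' has an infinite Neumann-series inverse rather than a finite two-term decomposition. Instead of truncating it, I would analyze Algorithm~\ref{alg:qp_single_insdel} directly: with $j_i \sim \Geom\Paren{\frac{1-2\delta}{1-\delta}}$, delete $j_i$ extra bits of $\wt{\bfx}$ at position $i$ and carry the sign $(-1)^{j_i}$. The heart of the proof is the generating-function identity
\[
\gamma\sum_{j\geq 0} p(1-p)^j (-1)^j \Paren{\frac{z(1-\delta)}{1-\delta z}}^{1+j} = z,
\]
with $p = (1-2\delta)/(1-\delta)$ and $\gamma = 1/(1-2\delta)$, which follows by summing a geometric series and using $1-p = \delta/(1-\delta)$. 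Probabilistically, it says that after the signed rescaling, the expected generating function of the skip at position $i$ collapses to a single-bit advance in $\bfx$. Since the $j_i$'s are independent and the channel's geometric gaps between consecutive surviving bits are independent, this identity multiplies across the $k$ positions, and only the configuration ``the $i$-th bit read is $\bfx_i$'' survives cancellation, giving $\EE{f_{\QP(\mcC)}(\wt{\bfx})} = f(\bfx_{:k})$.

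The main obstacle I expect is the \emph{buffer overflow} in the deletion case: $f_{\QP(\mcC)}$ reads $\wt{\bfx}$ out to position $k + J_k$, where $J_k = \sum_i j_i$ has mean $\Theta(k\delta/(1-2\delta))$. Setting $k'$ as prescribed by Algorithm~\ref{alg:qp_single_insdel}---roughly $k + O\Paren{\frac{k\delta}{1-2\delta} + \log\varepsilon^{-1}}$---and applying a Chernoff-type bound for sums of independent geometrics gives an overflow probability of at most $\varepsilon/\gamma^k$. Combined with the a priori bound $\abs{f_{\QP(\mcC)}} \leq C\gamma^k$, this produces an overflow-induced bias of at most $\varepsilon C$, matching the lemma's approximation error and completing the proof.
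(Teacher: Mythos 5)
Your proposal is correct and follows essentially the same route as the paper: the same per-position quasi-probability decompositions for the symmetry, insertion, and deletion channels (with the same $\gamma$'s), the same Monte-Carlo inversion with alternating signs and geometric sampling for the deletion case, and the same Chernoff bound on a sum of iid geometrics to control buffer overflow, yielding the $\varepsilon C$ bias in exactly the same way.

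The one place you deviate slightly in presentation is the verification for the deletion channel. The paper first derives the quasi-probability coefficients $q_j = \frac{1}{1-\delta}\left(\frac{-\delta}{1-\delta}\right)^j$ from the condition $\mcG_q(z)\,\mcG_{\Bernoulli(\delta)}(z) = 1$ (one Bernoulli per input position, processed in reverse order so that position $i$ of the current string is still $\bfx_i$ when it is inverted), and then hands correctness off to the generic Monte-Carlo machinery of Claim~\ref{clm:sample_qp}. You instead parameterize the channel per output position via the geometric-skip generating function $w = \frac{z(1-\delta)}{1-\delta z}$, and verify the algorithm end-to-end with the single identity $\gamma\sum_{j\ge 0} p(1-p)^j(-1)^j w^{1+j} = z$, which I checked and which is correct. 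These are dual bookkeepings of the same inversion — yours is a slightly more self-contained check of Algorithm~\ref{alg:qp_single_insdel} and avoids having to stretch Claim~\ref{clm:sample_qp} to an infinite decomposition set, while the paper's derivation makes the quasi-probability structure (and hence the self-composability used in Corollary~\ref{cor:composed_insdel}) more transparent. Either route is valid and neither has a gap.
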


For the rest of this section, we adopt the notation that when $f$ is a function that takes as input $k$ bits and $\bfx$ is a string of length $n\neq k$, we set $f(\bfx)$ to be the application of $f$ to the first $k$ bits of $\bfx$ if $n > k$, and return $0$ if $n < k$.

\subsection{Composability of the QP Decomposition -- Lemma~\ref{lem:qp_single_insdel} Suffices}

Before proving Lemma~\ref{lem:qp_single_insdel}, we show that its self-composability suffices to extend it to any combination of insertion deletion channels.
Indeed, it is easy to see that for any set of channels $\mcC_1, \ldots, \mcC_L$, we can apply Lemma~\ref{lem:qp_single_insdel} recursively, to construct $f_{\QP(\mcC_1 \circ \cdots \circ \mcC_L)} = f_{\QP(\mcC_L), \ldots, \QP(\mcC_1)}$, and that composing the guarantees of Lemma~\ref{lem:qp_single_insdel} using the triangle inequality gives us
\begin{equation}
    \begin{aligned}
        &\abs{\Expect{\bfx \sim \mcX}{f(\bfx)} - \Expect{\substack{\bfx \sim \mcX \\ \wt{\bfx} \sim \mcC_1 \circ \cdots \circ \mcC_L(\bfx)}}{f_{\QP(\mcC)}\left(\wt{\bfx}\right)}} \leq \\
        &\;\; \leq \sum_{1 \leq i \leq L}  \abs{\Expect{\substack{\bfx \sim \mcX \\ \wt{\bfx} \sim \mcC_{i+1} \circ \cdots \circ \mcC_L(\bfx)}}{f_{\QP(\mcC_{L:i+1})}\left(\wt{\bfx}\right)} - \Expect{\substack{\bfx \sim \mcX \\ \wt{\bfx} \sim \mcC_i  \circ \cdots \circ \mcC_L(\bfx)}}{f_{\QP(\mcC_{L:i})}\left(\wt{\bfx}\right)}} \leq L \varepsilon
    \end{aligned}
\end{equation}

This yields the following Corollary:

\begin{corollary}
    \label{cor:composed_insdel}
    Let $\mcC_1, \ldots, \mcC_L$ be any set of channels individually covered by Lemma~\ref{lem:qp_single_insdel}.
    Then, there exists a QP estimator for the composite channel $\mcC = \mcC_1 \circ \cdots \circ \mcC_L$.
\end{corollary}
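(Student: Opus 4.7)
The plan is to construct $f_{\QP(\mcC)}$ by peeling channels off from the outside in, invoking Lemma~\ref{lem:qp_single_insdel} once per layer. Set $f^{(0)} = f$ and, for $j = 1, \ldots, L$, define $f^{(j)}$ to be the QP estimator produced by Lemma~\ref{lem:qp_single_insdel} applied to the channel $\mcC_{L - j + 1}$ with target function $f^{(j - 1)}$ and accuracy parameter $\varepsilon / L$. The final oracle $f_{\QP(\mcC)} \defeq f^{(L)}$ consumes a prefix of a trace from $\mcC(\mcX)$ and outputs a signed, rescaled sample.

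The key step is a telescoping triangle inequality, exactly of the flavor already sketched in the paragraph preceding the corollary. Writing $\mcC_{\geq i} \defeq \mcC_i \circ \cdots \circ \mcC_L$ and $\mcX_i \defeq \mcC_{\geq i + 1}(\mcX)$, Lemma~\ref{lem:qp_single_insdel} applied with input distribution $\mcX_i$ on the prefix seen by $f^{(L - i)}$ yields
\[
    \abs{\Expect{\bfz \sim \mcX_i}{f^{(L - i)}(\bfz)} - \Expect{\wt{\bfz} \sim \mcC_i(\mcX_i)}{f^{(L - i + 1)}(\wt{\bfz})}} \leq \varepsilon / L.
\]
Summing these $L$ differences collapses $\Expect{\bfx \sim \mcX}{f(\bfx)}$ into $\Expect{\wt{\bfx} \sim \mcC(\mcX)}{f_{\QP(\mcC)}(\wt{\bfx})}$ with total bias at most $\varepsilon$. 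The parameter bookkeeping follows by unrolling the recursions for $k_j$ and $C_j$ supplied by Lemma~\ref{lem:qp_single_insdel}: for constant $L$, the per-layer blowups $k_j = O(k_{j-1} + \log(L / \varepsilon))$ and $C_j = C_{j-1} \cdot \exp(O(k_j + \log(L / \varepsilon)))$ compose into an overall $k^\prime = O(k + \log(\varepsilon^{-1}))$ and $C^\prime = C \cdot \exp(O(k + \log(\varepsilon^{-1})))$.

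There is essentially no obstacle, which is the very content of the corollary: Lemma~\ref{lem:qp_single_insdel} is stated for an \emph{arbitrary} input distribution, and that distributional freedom is precisely what allows us to invoke it on the intermediate distributions $\mcX_i$ that arise after only a suffix of the channels has acted. Each $f^{(j - 1)}$ remains a bounded real-valued function on bit strings, so it automatically satisfies the hypotheses of the next invocation, and the induction closes cleanly.
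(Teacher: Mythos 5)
Your proposal is correct and takes essentially the same approach as the paper: iterate Lemma~\ref{lem:qp_single_insdel} once per layer (exploiting that the lemma holds for arbitrary input distributions, so it can be applied to the intermediate distribution after a suffix of the channels has acted) and telescope the per-layer biases via the triangle inequality. The only cosmetic difference is that you allocate $\varepsilon/L$ per layer to land at total bias $\varepsilon$, whereas the paper allocates $\varepsilon$ per layer and accepts total bias $L\varepsilon$; since $L$ is constant these are interchangeable.
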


Finally, in order to extend our results to the high deletion probability regime, we simply note that any deletion channel can be decomposed into a series of deletion channels with lower deletion probabilities.
Indeed, if $\mcC_{\delta_1}, \mcC_{\delta_2}$ are the deletion channels with deletion probabilities $\delta_1, \delta_2$, then it is easy to see that their composition $\mcC_{\delta_1} \circ \mcC_{\delta_2}$ is the deletion channel with deletion probability $\mcC_{1 - (1 - \delta_1) (1 - \delta_2)}$.

Therefore, for any $\delta \in (0, 1)$, we can find a pair $\delta^\prime < 1/2$ and $L \in \N$ such that $1 - \delta = (1 - \delta^\prime)^L$.
Therefore,
\[
\mcC_{\delta} = \underbrace{\mcC_{\delta^\prime} \circ \cdots \circ \mcC_{\delta^\prime}}_{\times L}
\]
and we can extend Lemma~\ref{lem:qp_single_insdel} to this deletion probability as well.

Finally, we note that Corollary~\ref{cor:composed_insdel} implies Theorem~\ref{thm:pop_recovery}:

\begin{proof}[Corollary~\ref{cor:composed_insdel} implies Theorem~\ref{thm:pop_recovery}]
Let $\mcC = \mcC_1 \circ \cdots \circ \mcC_L$ be any combination of insertion, deletion and symmetry channels, and let $\mcX$ be any population of strings of length $k$.
From the argument above, we may assume that the deletion channels in this combination of channels have deletion probabilities below $1/2$ (otherwise, we decompose each such deletion channel into multiple deletion channels each with a lower deletion probability).

If we can estimate the probability of drawing any single string $\bfy \in \set{0,1}^k$, to within an error of $\varepsilon^\prime = 2^{-k} \varepsilon$, then we will have recovered the population to within a TVD of at most $\varepsilon$.

Let $\bfy \in \set{0, 1}^k$ be some bitstring, and consider the indicator mapping
\[
f = 1_\bfy:\set{0, 1}^k \rightarrow [-1, 1]\;\;\;\; f(\bfx) = \begin{cases}
    1&\bfx=\bfy\\
    0&\bfx\neq\bfy
\end{cases}\,.
\]
Corollary~\ref{cor:composed_insdel} allows us to construct a QP estimator for the composite channel $\mcC$.
In other words, there exists a function $f_{\QP(\mcC)}$ that takes as input strings of length
\[
k^\prime = O(k + \log((\varepsilon^\prime)^{-1})) = O(k + \log(\varepsilon^{-1}))\,,
\]
has bounded output
\[
\Norm{f_{\QP(\mcC)}}_\infty = 2^{O(k + \log(\varepsilon^{-1}))}\,,
\]
and for any distribution of input strings, running $f_{\QP(\mcC)}$ on the prefix of the trace of a sample $\bfx \sim \mcX$ is a nearly unbiased estimator of the expectation of $f(\bfx_{:k})$:
\[
\abs{\Expect{\bfx \sim \mcX}{f\left(\bfx_{:k} \right)} - \Expect{\substack{\bfx \sim \mcX \\ \wt{\bfx} \sim \mcC(\bfx)}}{f_{\QP(\mcC)}\left(\wt{\bfx}_{:k^\prime} \right)}} \leq \frac{1}{2} \varepsilon^\prime\,.
\]

Moreover, note that due to our choice of $f$ to be the indicator function for the string $\bfy$, we have
\[
\Expect{\bfx \sim \mcX}{f\left(\bfx_{:k} \right)} = \Prob{\bfx \sim \mcX}{\bfx = \bfy}\,.
\]

Let $\mu_\bfy$ be the empirical mean of this $f_{\QP(\mcC)}(\wt{\bfx})$ over $\NumSamples = \frac{10 k}{(\varepsilon^\prime)^2} \Norm{f_{\QP(\mcC)}}_\infty^2 = 2^{O(k + \log(\varepsilon^{-1})}$ iid traces $\wt{\bfx}^1, \ldots \wt{\bfx}^\NumSamples \sim \mcC(\mcX)$.
From the Hoeffding inequality, with very high probability this empirical mean is a good estimate of the mean of the QP estimator.
In other words,
\[
\Prob{}{\abs{\mu_\bfy - \Expect{\substack{\bfx \sim \mcX \\ \wt{\bfx} \sim \mcC(\bfx)}}{f_{\QP(\mcC)}\left(\wt{\bfx}_{:k^\prime} \right)}} > \frac{1}{2}\varepsilon^\prime} \ll 2^{-k}\,.
\]

Therefore, from the triangle inequality,
\begin{equation}
    \label{eq:proof_1_2_high_prob}
    \Prob{}{\abs{\mu_\bfy - \Prob{\bfx \sim \mcX}{\bfx = \bfy}} > \varepsilon^\prime} =\Prob{}{\abs{\mu_\bfy - \Expect{\bfx \sim \mcX}{f\left(\bfx_{:k} \right)}} > \varepsilon^\prime} \ll 2^{-k}
\end{equation}

We apply the process above for each string $\bfy \in \set{0, 1}^k$, resulting in a set of estimates $\mu_\bfy$ for $\Prob{\bfx \sim \mcX}{\bfx = \bfy}$.
Applying the union bound to equation~\eqref{eq:proof_1_2_high_prob}, we conclude that with high probability, all of these estimates are correct up to an additive error of $\pm \varepsilon^\prime$, and therefore constitute a recovery of $\mcX$ to within a TVD of $\varepsilon$.

\end{proof}

\subsection{Conventions for Synchronization Channels}

The final step before we prove Lemma~\ref{lem:qp_single_insdel}, is that we set a few conventions that will help us avoid minor technical issues with matching indices of bits in the input and output strings.

The deletion channel is defined as the process which deletes each bit of the input string with probability $\delta \in (0, 1)$.
This formulation does not imply a specific order in which bits are deleted or retained.
However, for our purpose, it will be beneficial to express this channel as a random process.

In particular, we define the deletion channel as a process which goes over the input string {\em in reverse} (\ie~from the last bit of the input to the first), and for each bit, deletes it w.p. $\delta$.
In other words, we define $\Del{i}{j}$ to be the operation which deletes $j$ bits from the $i$th bit of the input onwards ($\Del{i}{j}(\bfx) = \bfx_{:i} \bfx_{i+j:}$), and we define the deletion channel as
\[
\DelChannel_\delta = \Del{1}{\Bernoulli(\delta)} \circ \Del{2}{\Bernoulli(\delta)} \circ \cdots
\]
where $g \circ f$ denotes the composition of the function $g$ to the output of the function $f$.

This channel still deletes each bit of the input i.i.d with probability $\delta$, but by defining the order of the deletion operations to begin with the last bit of the string, we ensure that we can easily track which portion of the channel was in charge of deleting the $i$th bit of the input message ($\Del{i}{\Bernoulli(\delta)}$).
This will help us avoid annoying technicalities later in the proof.

Similarly, we define the operation $\Insert{i}{j^\prime}$ to be the operation that inserts $j^\prime$ (i.i.d. uniformly distributed) bits before the $i$th bit of the input message.
We use this definition of the insertion operation to define the insertion channel with the same indexing convention:
\[
\InsChannel_{\eta} = \Insert{1}{\Geom(1 - \eta)} \circ \Insert{2}{\Geom(1 - \eta)} \circ \cdots
\]

An important property of these single bit insertion and deletion operations that will come in handy later is that their composition is ``additive'' in the sense that deleting $j_1$ bits and then an additional $j_2$ bits is the same as deleting $j_1 + j_2$ bits. 
In other words, 
\begin{equation}
    \label{eq:insdel_are_additive}
    \begin{aligned}
        &\Del{i}{j_1} \circ \Del{i}{j_2} = \Del{i}{j_1 + j_2}\\
        &\Insert{i}{j^\prime_1} \circ \Insert{i}{j^\prime_2} = \Insert{i}{j^\prime_1 + j^\prime_2}
    \end{aligned}
\end{equation}

\subsection{QP Decomposition of the Insertion and Deletion Channels}
\label{subsubsec:QP_decomp_BDC}

We now turn to the proof of Lemma~\ref{lem:qp_single_insdel}.

\paragraph{Proof Strategy}

Over the next few paragraphs, we will construct a QP-decomposition for the inverse of the individual deletion $\Lambda = \Del{i}{\Bernoulli(\delta)}$ and insertion $\Lambda^\prime = \Insert{i}{\Geom(1 - \eta)}$ channels.
Given access to the individual inverse channels, we will be able to invert the effects of either the insertion or the deletion channels on the first $k$ bits of the input one at a time and compute $f$ on the result.
For instance for the deletion channel we would have:

\begin{equation*}
    \begin{aligned}
        f_{\QP} \left(\wt{\bfx}\right) &= f\left(\left[\left(\Del{k}{\Bernoulli(\delta)}\right)^{-1} \circ \cdots \circ \left(\Del{1}{\Bernoulli(\delta)}\right)^{-1}\right]\left(\wt{\bfx}\right)\right) \\
        &= f\left(\left[\left(\Del{k}{\Bernoulli(\delta)}\right)^{-1} \circ \cdots \circ \left(\Del{1}{\Bernoulli(\delta)}\right)^{-1} \circ \textup{Deletion Channel}_q\right]\left(\bfx\right)\right) \\
        &= f\left(\left[\left(\Del{k}{\Bernoulli(\delta)}\right)^{-1} \circ \cdots \circ \left(\Del{1}{\Bernoulli(\delta)}\right)^{-1} \circ \Del{1}{\Bernoulli(\delta)} \circ \Del{2}{\Bernoulli(\delta)} \circ \cdots\right]\left(\bfx\right)\right) \\
        &= f\left(\left[\Del{k+1}{\Bernoulli(\delta)} \circ \cdots\right]\left(\bfx\right)\right) = f(\bfx)
    \end{aligned}
\end{equation*}

One technical difficulty we need to overcome in our proof is that due to the asynchronous nature of the deletion channel, there is a small probability that the QP oracle may try to sample more than $k^\prime$ bits of its input.
We ensure that this happens with a negligible probability (see Claim~\ref{clm:no_overflows}), and in these rare case, we return $0$ to avoid an overflow, potentially causing a small bias $\varepsilon$ instead.

\paragraph{The QP Set}
Recall from Section~\ref{sec:PEC} that the first step in a QP decomposition is to select a QP set -- a set of operations that we will use to span the inverse channel.
Moreover, as we noted in Section~\ref{sec:PEC}, a typical QP set will be comprised of operations similar to those already present in the noise channel.

More concretely, our QP set for the deletion channel will be the individual deletion channels $\Lambda_{j} = \Del{i}{j}$ and for the insertion channel our QP set will be the individual insertion channels $\Lambda^\prime_{j^\prime} = \Insert{i}{j^\prime}$.

\paragraph{The QP Decomposition}
Our QP set uses the same set of errors as our original noise channel.
The original noise channel $\Lambda = \Del{i}{\Bernoulli(\delta)}$ had some probability $p_j = 1 - \delta, \delta, 0, \ldots$ of applying each individual number of deletions $\Del{i}{0}, \Del{i}{1}, \Del{i}{2}, \ldots$.

Similarly, the QP decomposition assigns some quasi-probabilities $q_0, q_1,\ldots$ to the channels in the QP set $\Lambda_{0} = \Del{i}{0}, \Lambda_{0} = \Del{i}{1}, \ldots$.
Recall from Section~\ref{sec:PEC} that our goal in a QP decomposition is to ensure that:
\[
\sum_{j} q_{j} \Lambda_{j} = \Lambda^{-1}
\]
Or alternatively:
\begin{equation}
    \label{eq:adding_PEC_deletions}
    \begin{aligned}
        I = \Lambda_{0} = \sum_{j} q_{j} \Lambda_{j} \Lambda &= \sum_{i, j} p_{i} q_{j}  \Lambda_{j} \Lambda_{i} = \sum_{i, j} p_i q_{j} \Lambda_{i + j}
    \end{aligned}
\end{equation}
Equation~\eqref{eq:adding_PEC_deletions} gives us the necessary and sufficient conditions for this QP-decomposition to work:
\begin{equation}
    \label{eq:QP_condition_ugly}
    \forall \ell\in\N \;\;\;\; c_\ell \defeq \sum_{i + j = \ell} q_{j} p_{i} = \begin{cases}
        1 & \ell = 0\\ 
        0 & \text{otherwise}
    \end{cases}
\end{equation}

From here we could declare that $q_j$ are selected to be the solution to the set of linear equations~\eqref{eq:QP_condition_ugly}.
However, this would give us very little intuition for what these coefficients mean and how the QP norm might depend on the parameters of the channel.

Using the QP framework, we would say that the total number of deletions $\ell$ is equal to the sum of those created by the quasi-probabilistic process $j$ and those created by the deletion channel itself $i$.
Moreover, these are independently distributed.
Therefore, the quasi-probability distribution $c_\ell$ is simply the convolution of the quasi-probability distribution $q_j$ of the number of deletions made by the QP method and the probabilities $p_i$ of the deletion channel.

\[
c = q \star p
\]

Let $\mcG$ denote a generating function.
Extending the definition of generating functions to quasi-probabilities in the natural way ($\mcG_{\QP} (z) = \sum_j q_j z^j$), it is easy to show that we can still use the common identity on the generating function of the sum of independent variables to get equation~\eqref{eq:QP_decomposition_BDC_1}.

\begin{equation}
    \label{eq:QP_decomposition_BDC_1}
    1 = \mcG_{c} (z) = \mcG_{q} (z) \cdot \mcG_{\Bernoulli(\delta)} (z) \Rightarrow \mcG_{q} (z) = \left(\mcG_{\Bernoulli(\delta)} (z)\right)^{-1} = \frac{1}{(1-q) + qz}
\end{equation}

Equation~\eqref{eq:QP_decomposition_BDC_1} can be used to derive a QP decomposition of the deletion channel from the formula for the sum of a geometric series.

\begin{equation}
    \label{eq:QP_decomposition_BDC_2}
    \mcG_{q} (z) = \frac{1}{(1-\delta) + \delta z} = \frac{1}{1 - \delta}  \sum_j \left( \frac{-\delta}{1 - \delta} \right)^j z^j \Rightarrow \forall j \;\;\;\; q_j = \frac{1}{1 - \delta}\left( \frac{-\delta}{1 - \delta} \right)^j
\end{equation}

Similarly, for the insertion channel, we decompose the inverse of the insertion channel $\Lambda^\prime = \Insert{i}{\Geom(1 - \eta)}$ into a quasi-probability distribution $c^\prime_j$ over the insertion operations $\Lambda^\prime_j = \Insert{i}{j}$.
A similar analysis would yield the following values for the QP-decomposition of the insertion channel:

\begin{equation}
    \label{eq:QP_decomposition_Insertion}
    \mcG_{q^\prime} (z) \defeq \sum_j q_j^\prime z^j = \frac{1}{\mcG_{\Geom(1 - \eta)} (x)} = \frac{1 - \eta x}{1 - \eta} \Rightarrow q^\prime = \Bernoulli\left( \frac{-\eta}{1 - \eta} \right)
\end{equation}

Finally, we compute the QP norms of these decompositions:

\begin{equation}
    \label{eq:QP_norms_InsDel}
    \begin{aligned}
        &\gamma_{\text{Deletion}} = \frac{1}{1 - \delta} \sum_j \abs{\left( \frac{-\delta}{1 - \delta} \right)^j} = \frac{1}{1 - 2\delta}\\
        &\gamma_{\text{Insertion}} = \frac{1 + \eta}{1 - \eta}
    \end{aligned}
\end{equation}

The most important property of these QP norms for our usecase is that for any $\delta < 1/2$ and $\eta \in [0, 1)$ these norms are finite and constant.
This property allows us to use this QP decomposition with a finite overhead.

\paragraph{The QP Oracle}

From here we can use the Monte-Carlo sampling described in Section~\ref{sec:PEC} (see Claim~\ref{clm:sample_qp}).
By applying it to the QP decomposition of the inverse deletion / insertion channel on each of the first $k$ bits of the trace, we can invert the individual components affecting the first $k$ bits of the input message one after another and apply $f$ to the result.

This method samples each set of selections for the number of deletions or insertions $\left(j_1, \ldots, j_k\right)$ with probability $p_{\vec{j}} \propto q_{\vec{j}} \defeq \prod_i q_{j_i}$, applies $f$ to the output and multiplies the result by the scaling factor $q_{\vec{j}} / p_{\vec{j}} = \sgn{q_{\vec{j}}} \gamma$.
Because we are only inverting the last $k$ steps of the channel (we assumed the channel went over the input string in reverse, so these are the components of the channel affecting the first $k$ bits of the message), we have
\[
\gamma=\begin{cases}
    \gamma_{\text{Deletion}}^k = \left(\frac{1}{1 - 2\delta}\right)^k \\
    \gamma_{\text{Insertion}}^k = \left(\frac{1+\eta}{1 - \eta}\right)^k
\end{cases}
\]
In both cases $\gamma = \exp(O(k))$, completing the proof of Lemma~\ref{lem:qp_single_insdel}.

\paragraph{Avoiding Buffer Overflows}

A final technical detail that we need to consider is the fact that our QP decomposition of the deletion channel may also perform additional deletions.
This could cause our QP oracle to access bits beyond the $k$th bit of its input string.

To avoid this, we modify the QP oracle in two ways.
First, we allow it to access $k^\prime = k + 4 \cdot \frac{\delta}{1-2\delta} k + 4 \cdot \frac{\delta}{1 - \delta} \Paren{\log(\gamma) + \log\left(\varepsilon^{-1}\right)} > k$ bits of the input.
This will ensure that the probability of a ``buffer overflow'' (i.e., of us reading beyond the input length of $f_{\QP(\mcC)}$ is very small).
Secondly, we add a test to our QP oracle to avoid this very rare edge case (and have it return the arbitrary value of $0$).

\begin{claim}
\label{clm:no_overflows}
The probability that the QP sampling process above goes beyond the $k^\prime$th bit of the input is at most $\frac{\varepsilon}{\gamma}$.
\end{claim}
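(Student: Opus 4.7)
My plan is to recognize the overflow event as a deep lower-tail deviation of a binomial and apply a Chernoff bound whose exponent exceeds $\log(\gamma/\varepsilon)$.

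The deletion branch of Algorithm~\ref{alg:qp_single_insdel} draws $j_1, \ldots, j_k \sim \Geom(1-p)$ independently with $p = \delta/(1-\delta)$, and the oracle reads $k + \sum_i j_i$ bits of $\wt{\bfx}$, so an overflow occurs iff $\sum_i j_i > k' - k$. By the classical equivalence between sums of geometrics and negative binomials, $\sum_i j_i$ is distributed as the number of failures preceding the $k$-th success in an i.i.d.\ $\Ber(q)$ process with $q = 1-p = (1-2\delta)/(1-\delta)$; hence $\Pr[\text{overflow}] = \Pr[\sum_i j_i > k'-k] = \Pr[S_{k'} < k]$, where $S_{k'} \sim \mathrm{Binomial}(k', q)$.

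The mean of $\sum_i j_i$ is $\mu := kp/q = k\delta/(1-2\delta)$, and by the choice of $k'$ in line~1 we have $k' - k \geq 4\mu$, so $\{S_{k'} < k\}$ is well inside the lower tail and a Chernoff bound applies. I would use either the relative-entropy form $\Pr[S_{k'} < k] \leq \exp\!\Paren{-k'\, D\Paren{k/k' \,\|\, q}}$ or, equivalently, the MGF bound $\Pr[\sum_i j_i > T] \leq e^{-sT}\Paren{(1-p)/(1-pe^s)}^k$ optimized over the tilt $s > 0$. Substituting the explicit value of $k'$ and simplifying with the elementary inequalities $\log(1+x) \leq x$ and $\log(1/p) \geq 1-p$ then gives a Chernoff exponent that dominates $\log(\gamma/\varepsilon) = k\log(1/(1-2\delta)) + \log(1/\varepsilon)$, which is exactly what the claim demands.

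The main obstacle is this last arithmetic step: verifying that the $\delta$-dependent constants in the two pieces of $k' - k$ are sharp enough to cancel the $\gamma = (1-2\delta)^{-k}$ scaling uniformly in $\delta < 1/2$. The ``$4\mu$'' piece produces a multiplicative Chernoff exponent of order $k$, while each extra unit of the logarithmic buffer $4(\delta/(1-\delta))(k\log(1/(1-2\delta)) + \log(1/\varepsilon))$ buys a $\log(1/p) \gtrsim (1-2\delta)/(1-\delta)$ increment in the exponent, so the prefactor $4\delta/(1-\delta)$ is tuned precisely so that this buffer translates into the $k\log(1/(1-2\delta)) + \log(1/\varepsilon)$ we need. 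A useful simplification, already exploited by Algorithm~\ref{alg:qp_kmer_estimation} which pre-decomposes larger deletion probabilities, is to restrict attention to $\delta < 1/3$; there, the simple tilt $e^s = 1/(2p)$ yields the clean bound $\Pr[\sum_i j_i > T] \leq 2^k (2p)^T$, and the verification collapses to a one-line logarithmic inequality.
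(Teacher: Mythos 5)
Your plan follows essentially the same route as the paper: recognize the overflow event as a large deviation of $\sum_i j_i$ for i.i.d.\ geometric $j_i$, and bound it by a Chernoff/MGF argument. The paper abstracts this into a standalone concentration lemma (Lemma~\ref{lem:bound_on_sum_geometric}) proved via the geometric MGF $M(t) = p/(1-(1-p)e^t)$ with the dual-regime tilt $t = \min\{1,\, p/(4(1-p))\}$ ($p$ denoting the success probability $1-\delta/(1-\delta)$), and then instantiates it; your negative-binomial/binomial reparametrization $\Pr[\sum j_i > k'-k] = \Pr[S_{k'} < k]$ with the relative-entropy Chernoff form is a standard equivalent phrasing of the same computation, so there is no real difference of method.

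There is, however, a gap in your proposed simplification. You justify restricting to $\delta < 1/3$ by pointing at the pre-decomposition step of Algorithm~\ref{alg:qp_kmer_estimation}, but that step only fires when $\delta \geq 1/2$; a deletion channel with $\delta \in [1/3, 1/2)$ is passed to Algorithm~\ref{alg:qp_single_insdel} undecomposed. So the one-line check you propose with the fixed tilt $e^s = 1/(2p)$ (which requires $p = \delta/(1-\delta) < 1/2$, i.e.\ $\delta < 1/3$, to be a valid tilt inside the radius of convergence) does not cover the full range $\delta < 1/2$ that Lemma~\ref{lem:qp_single_insdel} promises. The paper's choice of tilt is precisely there to handle both regimes: when the success probability $p$ is close to $1$ the exponent is capped at $1$, and when $p$ is smaller the exponent scales like $p/(4(1-p))$, which together let the bound hold uniformly. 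To rescue your simplification you would either have to change Algorithm~\ref{alg:qp_kmer_estimation} to decompose whenever $\delta \geq 1/3$ (a harmless but non-cosmetic modification), or keep a $\delta$-dependent tilt as in the paper and accept that the ``one-line'' arithmetic becomes the case analysis in Lemma~\ref{lem:bound_on_sum_geometric}. Your instinct that the constant-bookkeeping is the real content of the proof is exactly right; the proposal just does not yet carry that bookkeeping over the regime it needs to.
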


Claim~\ref{clm:no_overflows}, promises that we trigger the buffer-overflow condition in Algorithm~\ref{alg:qp_single_insdel} with a probability of at most $\frac{\varepsilon}{\gamma}$.
Therefore, replacing the output of the QP estimator (which is otherwise bounded by $\gamma C$) with $0$ on these cases will shift its expectation by an additive error of at most $C \varepsilon$.
Claim~\ref{clm:no_overflows} follows almost immediately from the following concentration bound on the sum of i.i.d. geometric random variables:

\begin{lemma}
\label{lem:bound_on_sum_geometric}
Let $X_1, \ldots, X_k \sim \text{Geom}(p)$ be independent and identically distributed geometric random variables for some $p \in (0, 1)$. Then, for any $\varepsilon > 0$,
\[
\Prob{}{\sum_{j=1}^k X_j > 4 \frac{1-p}{p} k + \Delta} < e^{-c \Delta}\,,
\]
where $c = \min \set{1, \frac{p}{4(1-p)}}$.
\end{lemma}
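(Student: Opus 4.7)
The plan is a Chernoff-style argument based on the moment generating function (MGF) of the geometric distribution. For $X \sim \Geom(p)$ with $\Pr[X = j] = p(1-p)^j$ for $j \in \N$, a direct series computation gives $M(t) \defeq \E[e^{tX}] = p / \bigl(1 - (1-p)e^t\bigr)$, valid whenever $(1-p)e^t < 1$. By independence, $\E[e^{tS}] = M(t)^k$ where $S = \sum_j X_j$, so Markov's inequality applied to $e^{tS}$ yields $\Pr[S > T] \leq e^{-tT} M(t)^k$ for every admissible $t > 0$.

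I would then take $T = 4(1-p)k/p + \Delta$ and $t = c = \min\{1,\, p/(4(1-p))\}$. A short calculation (using $-\ln(1-p) \geq p + p^2/2$) confirms that this $c$ lies in the admissible range of the MGF on both branches. Separating the $\Delta$-dependence from the $k$-dependence gives
\[
\Pr[S > T] \;\leq\; e^{-c\Delta} \cdot \bigl[e^{-4c(1-p)/p}\, M(c)\bigr]^k,
\]
so the lemma reduces to proving the single-variable inequality
\[
p \;\leq\; \bigl(1 - (1-p)e^c\bigr)\, e^{4c(1-p)/p}. \qquad (\star)
\]

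The main obstacle is verifying $(\star)$, which I would handle by splitting on the definition of $c$. When $p \leq 4/5$, setting $u = c = p/(4(1-p)) \in (0,1]$ makes $4c(1-p)/p = 1$; substituting $p = 4u/(1+4u)$ and clearing denominators reduces $(\star)$ to the elementary bound $(e^u - 1)/u \leq 4(e-1)/e$, which holds because $u \mapsto (e^u-1)/u$ is increasing on $(0,\infty)$ with maximum $e - 1 < 4(e-1)/e$ on $(0,1]$. When $p \geq 4/5$, setting $c = 1$ and $q = 1-p \in [0, 1/5]$ turns $(\star)$ into a smooth inequality on a compact interval; both sides equal $1$ at $q = 0$, the right-hand-side derivative is $4$ against the left-hand-side derivative $e-1$ there, and a direct check at $q = 1/5$ confirms the inequality throughout. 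Combining the cases establishes $(\star)$ and hence $\Pr[S > T] \leq e^{-c\Delta}$. The specific value $c = \min\{1,\, p/(4(1-p))\}$ is calibrated so that MGF admissibility and the target exponent slope are simultaneously satisfied, with the two cases corresponding to which of these constraints is binding.
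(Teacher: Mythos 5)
Your proposal is correct and follows essentially the same route as the paper: a Chernoff/Markov bound on $e^{tS}$ using the geometric MGF with the identical choice $t = c = \min\{1, p/(4(1-p))\}$ and the same case split on which branch of $c$ is active. The only difference is how the resulting scalar inequality is closed — the paper chains the elementary bounds $e^t < 1+2t$ and $\tfrac{1}{1-2\frac{1-p}{p}t} < 1 + 4\tfrac{1-p}{p}t < e^{4\frac{1-p}{p}t}$, whereas you reduce to the single-variable inequality $(\star)$ and verify it directly in each case; this is a cosmetic difference, not a different method.
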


\begin{proof}[Proof of Claim~\ref{clm:no_overflows} using Lemma~\ref{lem:bound_on_sum_geometric}]

In each step of the QP sampling process in Algorithm~\ref{alg:qp_single_insdel}, we delete $\Geom\Paren{1 - \frac{\delta}{1 - \delta}}$ bits of the input and then output an additional $1$ bit.

Applying Lemma~\ref{lem:bound_on_sum_geometric} with $p = 1 - \frac{\delta}{1 - \delta}$, proves that after $k$ iterations of this QP channel, the total number of deletions is at most $k^\prime - k$ with probability $1 - \frac{\varepsilon}{\gamma}$.

\end{proof}

\begin{proof}[Proof of Lemma~\ref{lem:bound_on_sum_geometric}]
First, recall that the moment-generating function (MGF) of a geometric random variable $X_i \sim \text{Geom}(p)$ is given by:
\[
M_{X_i}(t) = \EE{e^{t X_i}} = \frac{p}{1 - (1 - p) e^t}, \quad \text{for } t < -\ln(1 - p).
\]

\textbf{Applying the Chernoff Bound:}

From the Markov inequality on $Z = \exp\Paren{t \sum_{i = 1}^k X_i}$, we have that
\[
\Prob{}{\sum_{j=1}^k X_j > C} < \frac{\EE{Z}}{e^{tC}} = e^{- t C} \left(M_{X_i}(t)\right)^k = e^{- t C}\Paren{\frac{p}{1 - (1-p)e^t}}^k\,.
\]

For any $t < -\ln(1 - p)$, the Chernoff bound states:
\[
\Prob{}{\sum_{j=1}^k X_j > C} \leq e^{-t C} \left(M_{X_i}(t)\right)^k = e^{-t C} \left(\frac{p}{1 - (1 - p) e^t}\right)^k.
\]

\textbf{Calculating the Bound:}

The next step of finding a Chernoff bound is to pick a good value of $t$.
In our case, we will set $t = \min \set{1, \frac{p}{4(1-p)}}$.
When $0 < p \leq 3/4$, we have
\[
t \leq p < -\ln(1-p)\,,
\]
and when $1 > p > 3/4$, we have
\[
t \leq 1 < -\ln(1-p)\,.
\]

Because $t \leq 1$, we have $e^t < 1 + 2 t$, so
\[
\frac{p e^t}{1 - (1 - p) e^t} < \frac{p}{1 - (1-p)(1+2t)} = \frac{1}{1 - 2\frac{1-p}{p}t}\,.
\]

Because $t \leq \frac{p}{4(1-p)}$, we have $2\frac{1-p}{p}t \leq \frac{1}{2}$, so
\[
 \frac{1}{1 - 2\frac{1-p}{p}t} < 1 + 4 \frac{1-p}{p}t < \exp \Paren{4 \frac{1-p}{p}t}\,.
\]

Therefore
\[
\Prob{}{\sum_{j=1}^k X_j > C} \leq e^{-t C} \left(\frac{p}{1 - (1 - p) e^t}\right)^k < \exp\Paren{4 \frac{1-p}{p} t - tC} = \exp\Paren{-t \Delta}\,.
\]

\end{proof}

\subsection{Extracting a \texorpdfstring{$k$}{k}-mer Estimation Algorithm}
\label{subsec:generating_funcs}

The final component of our analysis will be to show that applying Lemma~\ref{lem:qp_single_insdel} as in Algorithm~\ref{alg:qp_kmer_estimation} does indeed yield a good $k$-mer estimation algorithm.

\begin{claim}
    \label{clm:alg2good}
    Algorithm~\ref{alg:qp_kmer_estimation} is a good $k$-mer estimation algorithm.
\end{claim}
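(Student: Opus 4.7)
The plan has two parts: (i) show that $\EE{E}$ matches the target Fourier sum $\sum_{j\in[n]} e^{ij\omega}\,1_{\bfx_{j:j+k}=\bfw}$ up to additive $\varepsilon$; (ii) show that $E$ concentrates around its mean within the claimed number of samples. Both parts follow the standard ``smooth then invert'' template made available by the QP framework developed in the preceding sections.

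For (i), I apply Corollary~\ref{cor:composed_insdel} to the indicator oracle $f^0 = 1_\bfw$ and the composite channel $\mcC$, using the decomposition of high-deletion stages in lines 3--5 of Algorithm~\ref{alg:qp_kmer_estimation} to reduce to the low-deletion regime handled by Lemma~\ref{lem:qp_single_insdel}. This produces $f^L = f^0_{\QP(\mcC)}$, a nearly unbiased inverse. Memorylessness of each elementary channel then gives, after conditioning on the event that the $j$-th bit of $\wt{\bfx}$ is the image of the $i$-th bit of $\bfx$,
\[
h(j) := \EE{f^L(\wt{\bfx}_{j:})} \;=\; \sum_i P_{i,j}(\mcC)\cdot 1_{\bfx_{i:i+k}=\bfw} \;\pm\; \varepsilon',
\]
where $P_{i,j}(\mcC)$ is the alignment probability. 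The generating-function identity $\sum_j z^j P_{i,j}(\mcC) = C_\mcC\cdot G_{\abs{\mcC}}(z)^{i}$ (with $G_{\abs{\mcC}}$ as defined in Section~\ref{subsec:generating_funcs}: $G(z) = \delta + (1-\delta)z$ for a deletion stage, $G(z) = (1-\eta)/(1-\eta z)$ for an insertion stage, $G(z) = z$ for a symmetry stage, and the product of these by independence of the per-stage alignments for compositions) then makes substituting $z = G^{-1}_{\abs{\mcC}}(\zeta)$ collapse the inner sum to $\zeta^{i}$. The $\zeta^{-1}$ prefactor in the algorithm's output absorbs the residual indexing shift and the constant $C_\mcC$, leaving $\EE{E} = \sum_j \zeta^j\,1_{\bfx_{j:j+k}=\bfw} \pm O(\varepsilon')$.

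For (ii), Lemma~\ref{lem:qp_single_insdel} applied $L$ times gives $\Norm{f^L}_\infty \leq C' = \exp(O(kL))\,\poly(\varepsilon^{-1})$. A Hoeffding bound for each $\mu_j$, followed by a union bound over $j \in [n]$ and a triangle inequality on the weights $z^j$, yields $\abs{E - \EE{E}} \leq n\,\abs{z}^n\,C'\,\sqrt{\log(n/\delta)/N}$ with probability $1-\delta$. The key observation is that $\abs{z}^{2n} = (1+O(\omega^2))^n$: a Taylor expansion of $G^{-1}_{\abs{\mcC}}$ at $\zeta=1$ shows that the linear coefficient of $\abs{z}^2$ in $\omega$ vanishes (because $G_{\abs{\mcC}}$ carries a real neighbourhood of $1$ to a real neighbourhood of $1$), so only the $\omega^2$ term survives. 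Setting $N$ to make the concentration error at most $\varepsilon$ then produces exactly the ``good'' sample complexity $N = (1+O(\omega^2))^n\,\poly(\varepsilon^{-1})\,\polylog(1/\delta)$, with the $\poly(n)$ factors absorbed into the exponential term.

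I anticipate two delicate points. The first is verifying the alignment-PGF factorisation for compositions of insertion, deletion and symmetry channels, since the three channel types treat positions very differently and one must argue that the per-stage alignments truly combine as a sum of independent generating functions --- this is straightforward for any two consecutive stages by the Markov property, but must be tracked uniformly across the tower while keeping careful account of boundary effects from the finite input length $n$ and finite trace prefix. The second is the high-deletion sub-decomposition in line 4 of Algorithm~\ref{alg:qp_kmer_estimation}: each sub-stage has $\delta' < 1/3$ with QP norm bounded by an $O(1)$ constant, and there are only $O(\log(1/(1-\delta)))$ of them, so the composite bound on $\abs{f^L}$ stays within the $\exp(O(kL))\poly(\varepsilon^{-1})$ budget and does not spoil the good sample complexity. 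Both reduce to careful bookkeeping once the main expectation/variance decomposition above is in place.
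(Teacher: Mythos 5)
Your proposal is correct and follows essentially the same route as the paper: apply Corollary~\ref{cor:composed_insdel} to the indicator oracle, decompose $\mu_j$ by alignment events, use the length-PGF composition identity $G_{\abs{\mcC}} = G_{\abs{\mcC_L}}\circ\cdots\circ G_{\abs{\mcC_1}}$ and substitute $z=G^{-1}_{\abs{\mcC}}(\zeta)$, then combine the $\exp(O(k))\poly(\varepsilon^{-1})$ bound on $\Norm{f^L}_\infty$ with Hoeffding and the Taylor/M\"{o}bius argument that $|z|=1+O(\omega^2)$. (One small slip: your displayed insertion-stage PGF should carry an extra factor of $z$, i.e. $G(z)=z(1-\eta)/(1-\eta z)$, since each surviving input bit also contributes one output position; this does not affect the structure of the argument.)
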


\begin{proof}[Proof of Claim~\ref{clm:alg2good}]

Recall that in our setting, we have traces $\wt{\bfx}$ drawn independently from an application of the noise channel $\mcC = \mcC_1 \circ \cdots \circ \mcC_L$ to an input string $\bfx$.
This channel maps each bit of the input string into a distribution of lengths $\abs{\mcC}$ in the output string.
Let $j_l \sim \abs{\mcC}$ denote the number of bits of the output string originating from the $j$th bit of the input.

Let $J_i = j_1 + \cdots + j_i$ denote their cummulative sums, and consider the event $E_{l \rightarrow j} = \set{J_{l-1} \leq j} \cap \set{J_{l} \geq j}$ denote the event that the $j$th bit of the output string was part of the output of the channel on the $l$th bit of the input.
Conditioned on this event, the $j$-suffix of the trace is distributed according to a trace of the $l$-suffix of the input string
\begin{equation}
\label{eq:trace_suffix}
    \left( \Paren{\wt{\bfx}}_{j:} \mid E_{l \rightarrow j} \right) \sim \wt{\Paren{\bfx_{l:}}}
\end{equation}

Next, we will use equation~\eqref{eq:trace_suffix} to relate the output of our QP estimator to the $k$-mer estimation value, using an analysis similar to the one used quite often in trace reconstruction literature~\cite{nazarov2017trace,de2017optimal,holden2020subpolynomial,rubinstein2022average}.
Recall that the goal of our $k$-mer estimation is to approximate the following value:
\[
k\text{-mer Value}_{\zeta, \bfw} \defeq \sum_{l \in [n]} \zeta^{l} 1_{\bfx_{l:l+k} = \bfw}
\]

Let $\mu_j$ denote the empirical average of the QP estimator on the $j$-suffixes of the traces.
As we have shown that our QP estimator is a good estimator, we have
\[
\mu_j = \Expect{\wt{\bfx} \sim \mcC(\bfx)}{f_{\QP(\mcC)}\left( \wt{\bfx}_{j:} \right)} \pm \mcE = \sum_{l \in [n]} \Prob{\wt{\bfx} \sim \mcC(\bfx)}{ E_{l \rightarrow j}} 1_{\bfx_{l:l+k} = \bfw} \pm (\varepsilon + \mcE)
\]
where $\mcE$ is the error due to the empirical mean estimation and $\varepsilon$ is the error allowance of the QP estimator.

Let $G_{\abs{\mcC}}(\zeta)$ denote the generating function of the distribution $\abs{\mcC}$.
From basic identities on generating functions, the generating function of a composition of multiple channels is the composition of the generating functions
\[
G_{\abs{\mcC}} = G_{\abs{\mcC_L}} \circ \cdots \circ G_{\abs{\mcC_1}}\; .
\]
This is because each bit in the output of the first channel $\mcC_L$ is then transformed into $\abs{\mcC_{L-1}}$ bits by the second channel, etc.

Because the channel is applied to each of the bits independently, we have
\begin{equation*}
    \begin{aligned}
        \sum_{j \in \N} \mu_j z^j &\approx \sum_{j \in \N} z^j \sum_{l \in [n]} \Prob{}{E_{l \rightarrow j}} 1_{\bfx_{l:l+k} = \bfw} =\\
        &= \sum_{j \in \N} \sum_{j^\prime \in \N} z^{j+j^\prime} \sum_{l \in [n]} \Prob{}{J_{l-1} = j} \Prob{}{j_l = j^\prime} 1_{\bfx_{l:l+k} = \bfw} =\\
        &= G_{\abs{\mcC}}(z) \sum_{l \in [n]} G_{\abs{\mcC}}(z)^{l-1} 1_{\bfx_{l:l+k} = \bfw}
    \end{aligned}
\end{equation*}
(where the $\approx$ suppresses the $O(\mcE+\varepsilon)$ terms due to the bias of our QP estimator and mean estimation error).
Therefore, setting $z = G^{-1}_{\abs{\mcC}}(\zeta)$, we have
\[
E = \zeta^{-1} \sum_{j \in \N} \mu_j z^j \approx k\text{-mer Value}
\]

In order to complete the proof, it remains to show that the errors and biases of this estimation are small even with a moderate sample complexity.
Clearly with very high probability, none of the traces will be of length greater than some $O(n)$.
Therefore, the errors in this approximation scale like $z^{O(n)} (\mcE + \varepsilon)$, while a combination of the bounds on the image of $\abs{f_{\QP(\mcC)}}$ in Lemma~\ref{lem:qp_single_insdel} and Hoeffding give us that with probability $\geq 1 - \delta$
\begin{equation}
\label{eq:empirical_mean_error}
    \abs{\mcE} \leq \frac{\exp(O(k)) \times \poly(\varepsilon^{-1}) \times \poly(\log(\delta^{-1}))}{\sqrt{\NumSamples}}
\end{equation}

Therefore, all that remains is to show that $z$ can be computed efficiently and is not too large in absolute value.
Note that for our choices of error channels, each of the individual generating functions $G_{\abs{\mcC_i}}$ is a M\"{o}bius transformation, and therefore, so is their composition.
Because $G_{\abs{\mcC}}$ is a M\"{o}bius transformation, we may easily invert it, and as it is a generating function, $1$ is a fixed point of $G_{\abs{\mcC}}$ and it has a real (i.e., not complex) gradient at that point.
Therefore, from a simple Taylor series argument ($G^{-1}_{\abs{\mcC}}$ is also a M\"{o}bius and is analytical near $1$), we have
\[
G_{\abs{\mcC}}^{-1}(e^{i \alpha}) = 1 + O(\alpha) i + O(\alpha^2)
\]
In other words, moving along the unit circle in the neighborhood of $1$, the absolute value of $G_{\abs{\mcC}}^{-1}$ grows quadratically with $\alpha$.

Therefore, for $z = G_{\abs{\mcC}}^{-1}(\zeta)$, we have $\abs{z} = (1 + O(\alpha^2))$, yielding a good bound on the sample complexity of our $k$-mer estimation algorithm.
    
\end{proof}

\section{Efficient Trace Reconstruction Algorithm}
\label{sec:linear_programming}
Over the last few sections of this paper, we have constructed a $k$-mer estimation algorithm for a general class of trace reconstruction problems.
In this section, we will show that this $k$-mer estimation implies a good trace reconstruction algorithm, proving Theorem~\ref{thm:main_res_inf}.

\begin{proof}[Proof of Theorem~\ref{thm:main_res_inf}]
    We will adapt the linear programming approach of Holenstein et al.~\cite{holenstein2008trace} from the mean-based setting to the $k$-mer setting.
Consider the variables $v_{\bfw, l} = 1_{\bfx_{l:l+k} = \bfw} \in \set{0, 1}$.
Given access to the value of $v_{\bfw, l}$ for all $\bfw \in \bits^k$ and $l \in [n]$, we can clearly reconstruct $\bfx$.
We will attempt to find these values by solving a relaxed version of this search with variables $v_{\bfw, l} \in [0, 1]$.
It will suffice to reconstruct $v_{\bfw, l}$ for all aperiodic $\bfw$, because for any prefix $\bfx_{:l}$, at least one of the completions 
\[
\bfx_{:l+1} = \begin{cases}
    \bfx_{:l} \mid 0\\
    \bfx_{:l} \mid 1
\end{cases}
\]
must end with an aperiodic suffix (see Lemma~\ref{lem:sparse_substring}).
This aperiodicity implies a sparsity condition giving us the inequality
\[
\forall l \in [n-k / 2], \bfw \in \bits^k\;\;\;\; \sum_{j \in [l, l + k/2]} v_{\bfw, j} \leq 1
\]

Moreover, each frequency $\alpha \in [\pm \wt{O}(n^{1/5})]$, the $k$-mer estimation on $\bfw$ and $\alpha$ on the $v_{\bfw , l}$.

We will show that we can reconstruct the values of the aperiodic $v_{\bfw, l}$ one at a time (i.e., we fix some aperiodic $\bfw \in \bits^k$ and reconstruct the values of $v_{\bfw, l}$ for $l = 1, 2, \ldots$), by solving a small number of linear programs using these linear constraints (i.e., the sparsity constraint and the $k$-mer estimation constraint) for each $l$.

We do this by induction over $l$.
Suppose that either $l = 1$, or that we have already recovered the first $l-1$ entries of $v_{\bfw, l}$.

Chase~\cite{chase2021separating} proved that
\begin{theorem} [Theorem 5 of \cite{chase2021separating}]
\label{thm:like_borwein}
    Let $\mathcal{P}_n^\mu$ denote the set of polynomials of the form $p(x) = 1 - \eta x^d + \sum_{n^\mu \leq j \leq n} a_j x^j$ where $\eta \in \{0, 1\}$ and $\abs{a_j} \leq 1$.
    
    For any $\mu \in (0, 1)$, there exists some constant $C_1 > 0$, such that for all sufficiently large $n$, any $p \in \mathcal{P}_n ^ \mu$:
    \[
    \max_{\lvert \alpha \rvert \leq n^{-2\mu}} \lvert p(e^{i\alpha}) \rvert \geq \exp\left(-C_1 n^\mu \log^5 (n)\right)
    \]
\end{theorem}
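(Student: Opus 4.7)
The result is a Littlewood-type lower bound, and the plan is to adapt the Borwein-Erd\'{e}lyi method (which underlies equation~\eqref{eq:borwein_erdlyi}) while exploiting the strong sparsity of $p$ in the low-degree range $0 \leq j < n^\mu$. The first step is to split $p(x) = L(x) + H(x)$ with $L(x) = 1 - \eta x^d$ of degree at most $n^\mu$ and $H(x) = \sum_{n^\mu \leq j \leq n} a_j x^j$ supported only on higher degrees. The guiding intuition is that any attempt by $H$ to cancel the ``spike'' $L$ on the small arc $\lvert \alpha \rvert \leq n^{-2\mu}$ must be extremely delicate, because $L$ is determined by only a handful of parameters while $H$ can affect the arc essentially through its low-order Taylor behaviour near $\alpha = 0$.

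I would argue by contradiction. Suppose $\max_{\lvert \alpha \rvert \leq n^{-2\mu}} \lvert p(e^{i\alpha}) \rvert < \exp(-C_1 n^\mu \log^5 n)$ for a large constant $C_1$ to be chosen. A Bernstein-Chebyshev type estimate shows that smallness of $|p|$ on this short arc propagates to smallness of each of the first $\sim n^\mu$ derivatives of $\alpha \mapsto p(e^{i\alpha})$ at $\alpha = 0$. The coefficients of $L$ are few, so this near-vanishing system almost determines them in terms of $H$; on the other hand, since $H$ is supported on degrees $\geq n^\mu$, its contribution to each low-order derivative at $1$ is a rigidly constrained falling-factorial combination of the $a_j$'s. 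Balancing these two families of constraints yields the desired lower bound, with the $\log^5 n$ arising from a doubling-scale iteration in which one passes from the arc of radius $n^{-2\mu}$ to arcs of radius $2 n^{-2\mu}, 4 n^{-2\mu}, \ldots$ and accumulates polylogarithmic losses at each doubling.

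The main obstacle, and the step I expect to dominate the technical work, is obtaining the exponent $n^\mu$ rather than the $n^{2\mu}$ that a black-box application of Borwein-Erd\'{e}lyi would give (since $\exp(-\Theta(1/\alpha_0))$ with $\alpha_0 = n^{-2\mu}$ only yields $\exp(-\Theta(n^{2\mu}))$). This improvement genuinely uses the sparsity at the low end and cannot be extracted purely from a Littlewood-type constraint on the coefficients: one must track how the truncation at degree $n^\mu$ interacts with the arc radius $n^{-2\mu}$ in such a way that the natural complex-analytic ``uncertainty principle'' leaves only $\approx n^\mu$ effective degrees of freedom instead of $\approx n^{2\mu}$. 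Implementing this requires either a carefully tailored Jensen / Nevanlinna-type argument on a thin annulus around the unit circle, or an iterative ``telescoping of arcs'' in the spirit of Borwein, Erd\'{e}lyi, and K\'{o}s, and this is the stage at which the $\polylog(n)$ factors are generated.
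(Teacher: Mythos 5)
This is a cited result (``Theorem 5 of \cite{chase2021separating}''), which the paper uses as a black box and does not reprove, so there is no in-paper proof to compare against. What you have written is therefore best judged on its own merits as a proof attempt, and on those terms it has a genuine, acknowledged gap.

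The gap is at exactly the step you flag yourself: converting the naive $\exp(-\Theta(n^{2\mu}))$ bound one gets from applying Borwein--Erd\'elyi on an arc of radius $n^{-2\mu}$ into the much stronger $\exp(-\widetilde{O}(n^{\mu}))$ bound that uses the sparsity in degrees $< n^{\mu}$. You name two possible routes (a tailored Jensen/Nevanlinna argument, or a ``telescoping of arcs'') but carry out neither, and the intermediate reasoning does not close the gap either. Specifically, the chain ``smallness of $p$ on the arc $\Rightarrow$ smallness of the first $\sim n^\mu$ derivatives of $\alpha \mapsto p(e^{i\alpha})$ at $0$ $\Rightarrow$ the low-degree coefficients are nearly determined by $H$'' breaks down: the low-degree part is just $1 - \eta x^d$, whose constant coefficient is hard-wired to $1$, so there is nothing to ``almost determine''; the force of the theorem is that $p(0) = 1$ is in tension with $|p|$ being uniformly tiny on the arc, and turning that tension into a quantitative lower bound is precisely the nontrivial content, not a bookkeeping step. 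The claim that a doubling-scale iteration over arcs of radius $2^j n^{-2\mu}$ ``accumulates polylogarithmic losses'' and so accounts for the $\log^5 n$ is likewise asserted without any mechanism. Chase's actual proof proceeds by a different route (a carefully engineered multiplier polynomial combined with Hadamard three-lines / maximum-modulus estimates on thin rectangular regions, iterated a bounded number of times, which is where the $\log^5 n$ arises), and none of that machinery appears in your sketch. As written, the proposal identifies the correct difficulty and the correct qualitative intuition, but it does not constitute a proof; you should either work out one of the two routes you mention in full detail or cite Chase's Theorem 5, as the paper itself does.
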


To use Theorem~\ref{thm:like_borwein}, suppose we consider the two hypotheses for the next entry of $v$.
In particular, let $v_{\bfw, :l-1} = v^\prime_{\bfw, :l-1}$, and suppose we add the constraints $v_{\bfw, l} = 1$ and $v^\prime_{\bfw, l + t} = 1$ for some specific $t \leq k / 2$ or the constraints $v^\prime_{\bfw, l + t} = 0$ for all $t < k/2$.

Theorem~\ref{thm:like_borwein} implies that for any set of values we assign to the other entries of $v_{\bfw, l+1:}$ and $v^\prime_{\bfw, l+1:}$, there exists some frequency $\alpha \in [-n^{-2\mu}, n^{-2\mu}]$ for which
\begin{equation}
    \label{eq:good_alpha}
    \abs{\sum_{j \in [n]} v_{\bfw, j} e^{i j \alpha} - \sum_{j \in [n]} v^\prime_{\bfw, j} e^{i j \alpha}} \geq \exp\left(-C_1 n^\mu \log^5 (n)\right)
\end{equation}

The order of the quantifiers in the statement above will make it difficult to proceed (we don't want to have $\alpha$ dependent on the valuations of $v, v^\prime$), but it is easy to see that taking a fine-net of $\alpha_m = -n^{-2\mu} + \frac{m}{10 \exp\left(-C_1 n^\mu \log^5 (n)\right) n^2}$ suffices.
More concretely, note that
\[
\frac{\partial}{\partial \alpha}  \left( \sum_{j \in [n]} v_{\bfw, j} e^{i j \alpha} - \sum_{j \in [n]} v^\prime_{\bfw, j} e^{i j \alpha} \right) < n^2
\]
so for the $\alpha_m$ closest to the value of $\alpha$ in equation~\eqref{eq:good_alpha}, we have
\[
\abs{\sum_{j \in [n]} v_{\bfw, j} e^{i j \alpha_m} - \sum_{j \in [n]} v^\prime_{\bfw, j} e^{i j \alpha_m}} \geq \frac{1}{2} \exp\left(-C_1 n^\mu \log^5 (n)\right)
\]

Therefore, at most one of $v$ or $v^\prime$ can have a solution that satisfies all the $k$-mer constraints, and we may proceed with this hypothesis of $v_{\bfw, l}$.
This process required at most $\exp(\wt{O}(n^{1/5})) \times \poly(n)$ linear programs, each requiring at most $\exp(\wt{O}(n^{1/5}))$ $k$-mer evaluations with $k = \wt{O}(n^{1/5})$, so the whole process had time and sample complexity at most $\exp(\wt{O}(n^{1/5}))$.
\end{proof}

We note that the algorithm above utilizes linear programming in a very similar way to the ones designed by Holenstein et al.~\cite{holenstein2008trace} for mean-based trace reconstruction.
While to the best of our knowledge, they have not yet been applied to $k$-mer based trace reconstruction, the main contribution here is that by making our $k$-mer estimation more robust, we are able to enjoy their lower time complexity even in the high-deletion regime, whereas it is not clear how to get a similar complexity from Chase's solution for the high-deletion regime.

\section{Discussion}\label{sec:discussion}
In this paper, we considered the $k$-mer estimation problem -- a key component in leading trace reconstruction results.
Previous algorithms for $k$-mer estimation were complicated and hard to adapt to minor changes in the noise model.

The ability to easily adapt our algorithm to more complex error models is crucial for any practical implementation of trace reconstruction algorithms, as it is highly unlikely that we will be faced with a pure form of the problem in practice.
While~\cite{cheraghchi2022mean} showed that we can adapt mean-based approaches to more general error models, it was not clear if $k$-mer based approaches could be similarly adapted.

Drawing inspiration from the field of quantum error mitigation, we introduce a simpler algorithm for $k$-mer estimation that can be easily applied to a wide variety of noise channels, allowing for time and sample efficient trace reconstruction from these channels.

Beyond the obvious question of improving our trace reconstruction algorithms to run with an even lower time / sample complexity, we posit the question of whether or not the quasi-probability method can be applied to additional learning theory problems.
The key properties of trace reconstruction and quantum error mitigation which made them amenable to this method are that we are given many samples from a known but random channel, that our goal is to estimate the mean of some property of the noise-free distribution, and that the main difficulty is in analysing a closed component of the error channel.

\section*{Acknowledgments}

Additionally, I would like to thank Zachary Chase for helpful comments on a previous version of this paper.

\bibliographystyle{alpha}
\bibliography{main}

\appendix

\end{document}